\documentclass[lettersize,journal]{IEEEtran}

\usepackage[T1]{fontenc}% optional T1 font encoding

\usepackage{bm}
\usepackage{amsthm}
\usepackage{amsmath}

\newtheorem{theorem}{Theorem}

\usepackage{makecell}

\usepackage{cite}

\usepackage[pdftex]{graphicx}

\ifCLASSINFOpdf

\else

\fi

\usepackage{diagbox}
\usepackage{xcolor}

\usepackage{amsmath}

\newtheorem{remark}{Remark} 

\interdisplaylinepenalty=2500
\usepackage[cmintegrals]{newtxmath}

\usepackage{algorithm}

\usepackage{algcompatible}

\usepackage{hyperref}
\usepackage{subfigure}

\newtheorem{assumption}{\textbf{Assumption}}

\begin{document}
%
% paper title
% Titles are generally capitalized except for words such as a, an, and, as,
% at, but, by, for, in, nor, of, on, or, the, to and up, which are usually
% not capitalized unless they are the first or last word of the title.
% Linebreaks \\ can be used within to get better formatting as desired.
% Do not put math or special symbols in the title.
\title{Decentralizing Coherent Joint Transmission Precoding via Fast ADMM with Deterministic Equivalents}
%
%
% author names and IEEE memberships
% note positions of commas and nonbreaking spaces ( ~ ) LaTeX will not break
% a structure at a ~ so this keeps an author's name from being broken across
% two lines.
% use \thanks{} to gain access to the first footnote area
% a separate \thanks must be used for each paragraph as LaTeX2e's \thanks
% was not built to handle multiple paragraphs
%

\author{Xinyu~Bian,~\IEEEmembership{Graduate Student Member,~IEEE,} Yuhao~Liu, Yizhou~Xu, Tianqi~Hou, Wenjie~Wang,~\IEEEmembership{Member,~IEEE,}
        Yuyi~Mao,~\IEEEmembership{Member,~IEEE,}
        and~Jun~Zhang,~\IEEEmembership{Fellow,~IEEE}% <-this % stops a space
\thanks{X. Bian is with the Department of Electronic and Computer Engineering, the Hong Kong University of Science and Technology, Hong Kong, China, and also with the Theory Lab, Central Research Institute, 2012 Labs, Huawei Technologies Co., Ltd., Hong Kong, China (E-mail: xinyu.bian@connect.ust.hk). Y. Liu is with the Department of Mathematical Sciences, Tsinghua University, Beijing 100084, China (E-mail: yh-liu21@mails.tsinghua.edu.cn). Y. Xu is with International Center of Theoretical Physics, Trieste, I-34151, Italy (Email: yizhou.xu.cs@gmail.com). T. Hou and W. Wang are with the Theory Lab, Central Research Institute, 2012 Labs, Huawei Technologies Co., Ltd., Hong Kong, China (houtianqi2@huawei.com, wang.wenjie@huawei.com). Y. Mao is with the Department of Electrical and Electronic Engineering, the Hong Kong Polytechnic University, Hong Kong, China (E-mail: yuyi-eie.mao@polyu.edu.hk). J. Zhang is with the Department of Electronic and Computer Engineering, the Hong Kong University of Science and Technology, Hong Kong, China (E-mail: eejzhang@ust.hk) \emph{(Xinyu Bian and Yuhao Liu are co-first authors.) (Corresponding author: Tianqi Hou)}

Part of this work was accepted by the 2024 IEEE International Conference on Acoustics, Speech, and Signal Processing (ICASSP) \cite{yliu2024}.}% <-this % stops a space
\thanks{}% <-this % stops a space
\thanks{}}

% note the % following the last \IEEEmembership and also \thanks - 
% these prevent an unwanted space from occurring between the last author name
% and the end of the author line. i.e., if you had this:
% 
% \author{....lastname \thanks{...} \thanks{...} }
%                     ^------------^------------^----Do not want these spaces!
%
% a space would be appended to the last name and could cause every name on that
% line to be shifted left slightly. This is one of those "LaTeX things". For
% instance, "\textbf{A} \textbf{B}" will typeset as "A B" not "AB". To get
% "AB" then you have to do: "\textbf{A}\textbf{B}"
% \thanks is no different in this regard, so shield the last } of each \thanks
% that ends a line with a % and do not let a space in before the next \thanks.
% Spaces after \IEEEmembership other than the last one are OK (and needed) as
% you are supposed to have spaces between the names. For what it is worth,
% this is a minor point as most people would not even notice if the said evil
% space somehow managed to creep in.

% The paper headers
\markboth{}%
{Shell \MakeLowercase{\textit{et al.}}: Bare Demo of IEEEtran.cls for IEEE Communications Society Journals}
% The only time the second header will appear is for the odd numbered pages
% after the title page when using the twoside option.
% 
% *** Note that you probably will NOT want to include the author's ***
% *** name in the headers of peer review papers.                   ***
% You can use \ifCLASSOPTIONpeerreview for conditional compilation here if
% you desire.

% If you want to put a publisher's ID mark on the page you can do it like
% this:
%\IEEEpubid{0000--0000/00\$00.00~\copyright~2015 IEEE}
% Remember, if you use this you must call \IEEEpubidadjcol in the second
% column for its text to clear the IEEEpubid mark.

% use for special paper notices
%\IEEEspecialpapernotice{(Invited Paper)}

% make the title area
\maketitle

% As a general rule, do not put math, special symbols or citations
% in the abstract or keywords.
\begin{abstract}
Inter-cell interference (ICI) suppression is critical for multi-cell multi-user networks. In this paper, we investigate advanced precoding techniques for coordinated multi-point (CoMP) with downlink coherent joint transmission, an effective approach for ICI suppression. Different from the centralized precoding schemes that require frequent information exchange among the cooperating base stations, we propose a decentralized scheme to minimize the total power consumption. In particular, based on the covariance matrices of global channel state information, we estimate the ICI bounds via the deterministic equivalents and decouple the original design problem into sub-problems, each of which can be solved in a decentralized manner. To solve the sub-problems at each base station, we develop a low-complexity solver based on the alternating direction method of multipliers (ADMM) in conjunction with the convex-concave procedure (CCCP). Simulation results demonstrate the effectiveness of our proposed decentralized precoding scheme, which achieves performance similar to the optimal centralized precoding scheme. Besides, our proposed ADMM solver can substantially reduce the computational complexity, while maintaining outstanding performance. 
\end{abstract}

% Note that keywords are not normally used for peerreview papers.
\begin{IEEEkeywords}
Coherent joint transmission (CJT), decentralized coordinated precoding, power minimization, deterministic equivalents (DE), alternating direction method of multipliers (ADMM), convex-concave procedure (CCCP).
\end{IEEEkeywords}

% For peer review papers, you can put extra information on the cover
% page as needed:
% \ifCLASSOPTIONpeerreview
% \begin{center} \bfseries EDICS Category: 3-BBND \end{center}
% \fi
%
% For peerreview papers, this IEEEtran command inserts a page break and
% creates the second title. It will be ignored for other modes.
\IEEEpeerreviewmaketitle

\section{Introduction \label{sectioni}}
The proliferation of advanced mobile applications has driven the dramatic increase of global mobile data traffic, which was forecasted to grow
$20 \sim 30$ percent per annum \cite{ericsson}. As a result, the demand for gigabits-per-second data rates and extremely high spectral efficiencies has increased significantly to support ultimate user experience \cite{5gppp}. Although network densification has been proposed as one promising solution for fifth generation (5G) mobile networks \cite{nbh2014}, it brings substantial inter-cell interference (ICI) that may severely deteriorate the communication performance. In particular, the cell-edge users in dense wireless networks may fail to be served by any surrounding base station (BS) due to severe ICI \cite{xyou2011}. Therefore, coordinated multi-point (CoMP) \cite{msa2010} has been proposed to mitigate ICI, which refers to a system where the transmissions and/or receptions at multiple, geographically separated BSs are dynamically coordinated. Specifically, in the downlink procedure of CoMP, user-centric coherent joint transmission (CJT) \cite{eda2013} is preferred, where a user equipment (UE) may be served by several BSs simultaneously so that its received signal power can be boosted. Besides, coordinated precoding is vital to multi-antenna CoMP for ICI suppression, where the global channel state information (CSI) is leveraged to exploit the spatial degrees of freedom of multi-antenna at BSs.

The coordinated precoders are typically optimized in a centralized manner, which requires the transmission of global CSI to the core network (CN). Moreover, the computation and distribution of the optimized precoding matrices have to be completed within in a transport time interval (TTI), i.e., $0.5$ ms. Nevertheless, most 5G cells deployed around the world are upgraded from the fourth generation (4G) mobile networks, which adopt the internet protocol-based radio access networks (IP-RAN) architecture \cite{3gpp2017}, and the round-trip communication latency between a BS and the CN is $2 \sim 4$ ms. This indicates that the existing IP-RAN architecture cannot support the optimization of coordinated precoding schemes in a centralized way. Although the cloud-RAN (C-RAN) \cite{yshi2015, mpeng2015} architecture is able to significantly reduce the transmission latency of CSI for centralized precoding, the cost of updating IP-RAN is fairly high since it requires erecting optical fronthauls and backhauls with high bandwidth and low latency \cite{dpli2018}. Therefore, it is imperative to investigate how to efficiently realize coordinated multi-cell precoding for user-centric CJT in the IP-RAN architecture.

\subsection{Related Works and Motivations}
As one of the most fundamental approaches to control and even completely eliminate the inter-user and inter-cell interference, various precoding schemes have been proposed for user-centric CJT. In particular, capacity-achieving dirty-paper coding (DPC) schemes were proposed for multi-antenna systems in \cite{hzha2004,wyu2007}. Considering the implementation complexity of DPC, cost-effective linear precoding schemes such as zero forcing (ZF) \cite{ssh2008,rzhang2010} have been investigated for cooperative multi-cell systems, where the inter-user interference is suppressed by jointly optimizing the precoders among coordinated BSs. Besides, in order to reduce the system complexity while maintaining
the benefits of CJT, an overlapping clustered coordination structure \cite{sve2007} for networked MIMO systems was considered in \cite{jzhang2009}, based on which, a ZF precoding scheme was developed. Another popular formulation for coordinated beamforming is to optimize the system performance measured by some utility function. For example, \cite{qshi2011} considered a sum rate maximization problem with a total transmit power constraint of all BSs, for which, an optimal iterative weighted minimum mean square error (WMMSE) algorithm was developed. Similarly, the transmit power minimization problem with per-user signal-to-interference-plus-noise ratio (SINR) constraints was investigated in \cite{hda2010}, which was optimally solved by leveraging the uplink-downlink duality (UDD). Although the two problems have been proved to be dual problems with the same optimal solution \cite{ebj2014}, neither of them was suitable for user-centric CJT. Thus, authors of \cite{zwu2018} renovated the WMMSE algorithm for user-centric CJT by further respecting the power constraints of BS antennas. 

However, all these methods were designed for centralized coordinated beamforming, which requires to transmit the instantaneous global CSI and optimized precoding matrices through fronthauls and backhauls, resulting in significant overhead and latency. Therefore, a distributed precoding scheme for user-centric CJT is needed for practical implementation to limit the amount of exchanged information for coordinated beamforming. To reduce the backhaul overhead, \cite{sh2013} investigated joint precoding and data compression for C-RAN. Besides, the uplink pilot signals were reused for downlink transmissions in \cite{jka2017} to avoid transmitting downlink CSI through the backhauls. Although these methods attempted to perform decentralized precoding with limited information exchange between the CN and BSs, they are still not applicable in the IP-RAN architecture due to the per-TTI information exchange.

%In \cite{fghan2021}, the packet error rate (PER) and throughput of each user has been investigated for non-orthogonal multiple access (NOMA) with hybrid automatic repeat request (HARQ) systems by using the Markov models, and based on it, two different optimization problems were solved numerically for the power constrained and reliability constrained scenarios.

\subsection{Contributions}
In this paper, we endeavor to develop a low-complexity decentralized precoding scheme for user-centric CJT in the IP-RAN architecture, where the cooperating BSs locally determine near-optimal precoders with minimal information exchange. Our main contributions are summarized below.

\begin{itemize}
    \item Considering the significant latency of information exchange in the IP-RAN architecture, we propose a decentralized precoding scheme for user-centric CJT, which only requires the covariance matrices of global CSI and avoids real-time information exchange in each transmission block. In particular, the transmit power minimization problem is investigated. Based on the covariance matrices of global CSI, we leverage deterministic equivalents (DE) \cite{RMT2011} from random matrix theory (RMT) to estimate the inter-cell interference, which decomposes the original design problem for individual BSs.
\end{itemize}

\begin{itemize}
    \item We show that the beamforming design problem at each BS is a non-convex quadratically constrained quadratic programming (QCQP) problem, which is in general NP-hard. Although these problems can be transformed to convex semi-definite programming (SDP) or second order cone programming (SOCP) problems, the solution complexity still hinders practical implementation at BSs. Therefore, a fast algorithm is proposed by adopting the alternating direction method of multipliers (ADMM) in conjunction with the convex-concave procedure (CCCP).
\end{itemize}

\begin{itemize}
    \item Simulation results show that the proposed precoding scheme with a blackbox SOCP solver achieves near-optimal sum-rate performance, which secures $8\% \sim 57\%$ improvements compared with the ZF-based centralized precoding scheme. Besides, the proposed fast ADMM-based solver only bears less than $5\%$ sum-rate loss compared with the SOCP solver, while significantly reduces $79\%$ of the computational overhead.
\end{itemize}

\subsection{Organization}
The rest of this paper is organized as follows. We introduce the user-centric CJT system and formulate the coordinated beamforming problem in Section \ref{sectionii}. In Section \ref{sectioniii}, we propose a method to estimate the inter-cell interference by adopting DE so that the problem can be decoupled at each BS. A low-complexity ADMM-based solver is developed in Section \ref{sectioniv} to obtain the precoding matrices at each BS. Simulation results are presented in Section \ref{sectionv} and Section \ref{sectionvi} concludes this paper.

\subsection{Notations} 
We use lower-case letters, bold-face lower-case letters, bold-face upper-case letters, and math calligraphy letters to denote scalars, vectors, matrices, and sets, respectively. The transpose and conjugate transpose of a matrix $\mathbf{M}$ are denoted as $\mathbf{M}^{T}$ and $\mathbf{M}^{H}$, respectively. Besides, we denote the complex Gaussian distribution with mean $\mu$ and variance $\sigma^2$ as $\mathcal{C} \mathcal{N}(\mu, \sigma^2)$. In addition, $\Re\{\cdot\}$ returns the real part of a complex number. Finally, we use $a.s.$ to represent almost surely convergence. Given a sequence of random variables $\{ X_ n\}$ and a random variable $X$, we assert that $X_n \to X$ almost surely (a.s.) if $P(\lim_{n \to \infty} X_n \neq X) = 0$.

\section{System Model and Problem Formulation \label{sectionii}}
\subsection{System Model}
We consider the downlink CJT in a multi-cell multi-user MIMO system as shown in Fig. \ref{cjtsystem}, where $N_c$ UEs are served by $N_B$ BSs. Each BS is equipped with $N_T$ transmit antennas, and every UE has a single receive antenna. The ensemble of UEs is defined as $\mathcal{U} \triangleq \{1, 2, \ldots, N_c\}$, and the set of BSs is denoted as $\mathcal{T} \triangleq \{1, 2, \ldots, N_B\}$. The subset $\mathcal{U}_p \subseteq \mathcal{U}$ with cardinality $U_p$ represents a distinct subcollection of UEs associated with BS $p$. For signal quality enhancement, a UE indexed as $i$ can benefit from the collaboration of a set of BSs, denoted as $\mathcal{T}_i \subseteq \mathcal{T}$ with cardinality $T_i$. Let $\mathbf{h}_{ip} \in \mathbb{C}^N$ represent the channel vector from BS $p$ to UE $i$, and let $\mathbf{w}_{ip} \in \mathbb{C}^N$ symbolize the precoding vector adopted by UE $i$ at the intended BS $p$. Specially, the channel vector from BS $p$ to UE $i$ is expressed in the form of $\mathbf{h}_{ip} = \mathbf{\Theta}_{ip}^{1/2} \mathbf{z}_{ip}$, where $\mathbf{z}_{ip}$ represents the small-scale fading and consists of independent and identically distributed (i.i.d.) complex Gaussian entries with zero mean and unit variance. Matrix $\mathbf{\Theta}_{ip} \in \mathbb{C}^{N_T \times N_T}$ is the covariance matrix of $\mathbf{h}_{ip}$, encompassing the impact of pathloss. The received signal at UE $i$, denoted as $y_i \in \mathbb{C}$, comprises the desired signal, intra-cell interference, and inter-cell interference, which is expressed as follows:
\begin{equation} \label{eq:model}
\begin{aligned}
     y_i   &= \sum_{p \in \mathcal{T}_i} \mathbf{h}_{ip}^H \mathbf{w}_{ip} s_i + \sum_{q \in \mathcal{T}_i} \mathbf{h}_{iq}^H \sum_{j \in \mathcal{U}_q \setminus \{i\}} \mathbf{w}_{jq} s_j \\
          &+ \sum_{q \in \mathcal{T} \setminus \mathcal{T}_i} \mathbf{h}_{iq}^H \sum_{j \in \mathcal{U}_q } \mathbf{w}_{jq} s_j + n_i.
\end{aligned}
\end{equation}
\begin{figure}[t]
\centering
\includegraphics[width=3.4in]{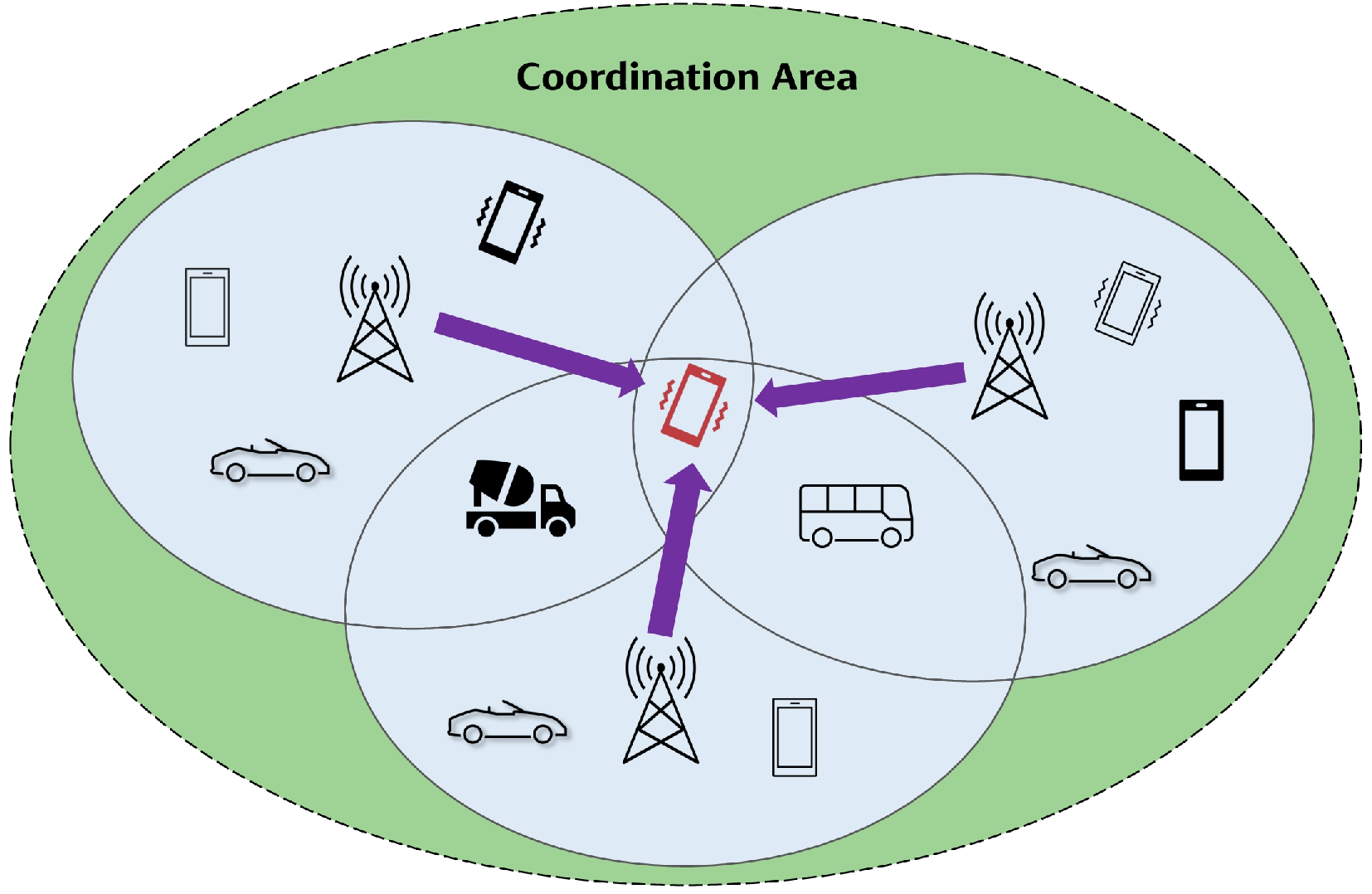}
\caption{A downlink CJT system, where the cell-edge UEs are served by multiple coordinating BSs simutaneously.}
\label{cjtsystem}
\end{figure}

\noindent Here $s_i$ represents the data symbol directed towards UE $i$ with zero mean and unit variance, and $n_i \sim \mathcal{C}\mathcal{N}(0,\sigma^2)$ stands for the additive white Gaussian noise. Based on the received signal model, the signal-to-interference-plus-noise ratio (SINR) at UE $i$ is given as follows:
\begin{equation}  \label{eq:originalSINR}
  \Gamma_{i} = \frac{\Big|\sum_{p\in\mathcal{T}_{i}}\mathbf{h}_{ip}^H \mathbf{w}_{ip}\Big|^2}{\displaystyle \sum_{j \in \mathcal{U} \setminus \{i\}} \Big|\sum_{q \in \mathcal{T}_{j}} \mathbf{h}_{iq}^H \mathbf{w}_{jq}\Big|^2 + \sigma^2}.
\end{equation}
The above SINR expression is entangled among the coordinated BSs, which prevents decentralized precoding. To facilitate the precoder design, we define the approximate SINR at UE $i$ concerning BS $p$ as follows:
\begin{equation}  \label{eq:SINR}
  \Gamma_{ip} = \frac{|\mathbf{h}_{ip}^H \mathbf{w}_{ip}|^2}{\displaystyle \sum_{q \in \mathcal{T}_i} \sum_{j \in \mathcal{U}_{q} \setminus \{i\}} |\mathbf{h}_{iq}^H \mathbf{w}_{jq}|^2 + \sum_{q \in \mathcal{T} \setminus \mathcal{T}_i} \sum_{j \in \mathcal{U}_{q} } |\mathbf{h}_{iq}^H \mathbf{w}_{jq}|^2+ \sigma^2}.
\end{equation}
However, the original SINR formulation in (\ref{eq:originalSINR}) is still used for performance evaluation for fair comparison, i.e., calculating the sum rate $R_s=\sum_{i \in \mathcal{U}}\log(1+\Gamma_i)$.

\subsection{Problem Formulation}
In the CoMP downlink, the BSs formulate precoders with the collective objective of minimizing the overall transmit power. The optimization is subject to the stipulated minimum SINR conditions for each BS with respect to its corresponding UEs, denoted as $\gamma_{ip}$, $\forall p$, $\forall i \in \mathcal{U}_p$. The formulation of the problem is expressed as follows:
\begin{equation}  \label{eq:opt1}
  \begin{aligned}
    & \min_{\{\mathbf{w}_{ip}\}}  \sum_{p \in \mathcal{T}} \sum_{i \in \mathcal{U}_p} \Vert \mathbf{w}_{ip} \Vert_2^2 \\
    &\ \ \text{s.t.} \quad \Gamma_{ip} \geq \gamma_{ip}, \quad \forall p \in \mathcal{T}, \forall i \in \mathcal{U}_p.
  \end{aligned}
\end{equation}
Let $\tau_{iq}$ and $\epsilon_{iq}$ denote the bounds of two inter-cell interference terms in (\ref{eq:SINR}), i.e., the interference from BS $q$ that serves UE $i$ or not. The 
optimization problem in (\ref{eq:opt1}) can thus be relaxed as follows:
\begin{subequations} \label{eq:opt2}
\begin{align}  
      &\min_{\{\mathbf{w}_{ip}, \epsilon_{iq}, \tau_{iq} \}}  
      \sum_{p \in \mathcal{T}} \sum_{i \in \mathcal{U}_p} \Vert \mathbf{w}_{ip} \Vert_2^2 \label{eq:opt21} \\
      &\ \text{s.t.}\ \frac{|\mathbf{h}_{ip}^H \mathbf{w}_{ip}|^2}{\displaystyle \sum_{j \in \mathcal{U}_p \setminus \{i\}}|\mathbf{h}_{ip}^H \mathbf{w}_{jp}|^2\! +\!\sum_{q \in \mathcal{T}_i \setminus \{p\}} \tau_{iq}\! +\! \sum_{q \in \mathcal{T} \setminus  \mathcal{T}_i} \epsilon_{iq} + \sigma^2} \geq \gamma_{ip},\nonumber \\
    &\quad \quad \quad \quad \quad \quad \quad \quad \quad \quad \quad \quad \quad \quad \quad \quad \quad \forall p \in \mathcal{T}, \forall i \in \mathcal{U}_p, \label{eq:opt22}\\
      &\quad \quad \sum_{j \in \mathcal{U}_q \setminus \{i\} } |\mathbf{h}_{iq}^H \mathbf{w}_{jq}|^2 \leq \tau_{iq}, \forall q,\ \forall i \in \mathcal{U}_q, \label{eq:opt23}\\
      &\quad \quad \sum_{j \in \mathcal{U}_q} |\mathbf{h}_{iq}^H \mathbf{w}_{jq}|^2 \leq \epsilon_{iq}, \forall q,\ \forall i \in \mathcal{U} \setminus \mathcal{U}_{q}. \label{eq:opt24}
\end{align}
\end{subequations}
As shown in Appendix \ref{sec:dual}, Problems (\ref{eq:opt1}) and (\ref{eq:opt2}) share identical optimal solutions $\{\mathbf{w}_{ip}\}$, at which, constraints specified in (\ref{eq:opt23}) and (\ref{eq:opt24}) are satisfied with equality. Therefore, once $\{\tau_{iq}\}$'s and $\{ \epsilon_{iq}\}$'s are determined, we are able to decouple Problem (\ref{eq:opt2}) for the $p$-th BS, $\forall p \in \mathcal{T}$ as follows:
\begin{subequations} \label{eq:opt3}
\begin{align}  
      \min_{\{\mathbf{w}_{ip} \}}  
    &\sum_{i \in \mathcal{U}_p} \Vert \mathbf{w}_{ip} \Vert_2^2 \label{eq:opt31} \\
      \text{s.t.}\ &\frac{|\mathbf{h}_{ip}^H \mathbf{w}_{ip}|^2}{\displaystyle \sum_{j \in \mathcal{U}_p \setminus \{i\}} |\mathbf{h}_{ip}^H \mathbf{w}_{jp}|^2\! +\! \sum_{q \in \mathcal{T}_i \setminus \{p\}} \tau_{iq}\! +\! \sum_{q \in \mathcal{T} \setminus  \mathcal{T}_i} \epsilon_{iq} + \sigma^2} \geq \gamma_{ip},\nonumber \\
    &\quad \quad \quad \quad \quad \quad \quad \quad \quad \quad \quad \quad \quad \quad \quad \quad \quad \quad \quad \forall i \in \mathcal{U}_p, \label{eq:opt32}\\
      &\ \sum_{j \in \mathcal{U}_p \setminus \{i\}} |\mathbf{h}_{ip}^H \mathbf{w}_{jp}|^2 \leq \tau_{ip},\ \forall i \in \mathcal{U}_p, \label{eq:opt33}\\
      &\ \sum_{j \in \mathcal{U}_p} |\mathbf{h}_{ip}^H \mathbf{w}_{jp}|^2 \leq \epsilon_{ip},\ \forall i \in \mathcal{U} \setminus \mathcal{U}_{p}. \label{eq:opt34}
\end{align}
\end{subequations}

The optimization variables for each sub-problem are exclusively the precoding vectors of the respective BS, and the constraints are solely related to the local CSI. Thus, each of these problems can be reformulated to an equivalent convex problems \cite{zql2006}, such as a second order cone programming (SOCP) and semidefinite programming (SDP) problem, which can be solved with many off-the-shelf solvers. For example, Problem (\ref{eq:opt3}) can be transformed to the following SOCP problem:
\begin{align}  \label{SOCP}
  \begin{aligned}
    & \min_{S,\{\mathbf{w}_{ip}\}} \ \ S \\
    &\ \ \text{s.t.}\ \ {\left[\begin{array}{c}
\sqrt{1+\frac{1}{\gamma_{ip}}} \mathbf{h}_{ip}^{H} \mathbf{w}_{ip} \\
\text{[}\mathbf{h}_{ip}^{H}\mathbf{w}_{ap}, \cdots, \mathbf{h}_{ip}^{H}\mathbf{w}_{bp}\text{]}^{T} \\
\text{[}\sqrt{\tau_{ic}},\cdots,\sqrt{\tau_{id}}\text{]}^{T}\\
\text{[}\sqrt{\epsilon_{ie}},\cdots,\sqrt{\epsilon_{if}},\sigma\text{]}^{T}\\
\end{array}\right] \succeq_{\mathrm{SOC}} 0, \forall i \in \mathcal{U}_{p}}, \\
&\ \ \ \ \ \! \ \ \ {\left[\begin{array}{c}
\sqrt{\tau_{ip}} \\
\text{[}\mathbf{h}_{ip}^{H}\mathbf{w}_{gp},\cdots,\mathbf{h}_{ip}^{H}\mathbf{w}_{hp}\text{]}^{T}
\end{array}\right] \succeq_{\mathrm{SOC}} 0, \forall i \in \mathcal{U}_p}, \\
&\ \ \ \ \ \! \ \ \ {\left[\begin{array}{c}
\sqrt{\epsilon_{ip}} \\
\text{[}\mathbf{h}_{ip}^{H}\mathbf{w}_{ap},\cdots,\mathbf{h}_{ip}^{H}\mathbf{w}_{bp}\text{]}^{T}
\end{array}\right] \succeq_{\mathrm{SOC}} 0, \forall i \in \mathcal{U} \setminus \mathcal{U}_p}, \\
&\ \ \ \ \ \ \! \ \ {\left[\begin{array}{c}
S \\
\text{[}\mathbf{w}_{ap}^{T},\cdots,\mathbf{w}_{bp}^{T}\text{]}^{T}
\end{array}\right] \succeq_{\text {SOC }} 0},
\end{aligned}
\end{align}
where $[x,\mathbf{y}^{T}]^{T}\succeq_{\text {SOC }} 0$ denotes $x\geq||\mathbf{y}||_{2}$, $a,b\in \mathcal{U}_{p}$, $c,d \in \mathcal{T}_{i}\setminus \{p\}$, $e,f \in \mathcal{T}\setminus \mathcal{T}_{p}$, and $g,h \in \mathcal{U}_{p}\setminus \{i\}$. However, $\{\tau_{iq}\}$'s and $\{\epsilon_{iq}\}$'s need to be judiciously decided as their values greatly affect the performance of the resulting decentralized precoding scheme. Therefore, determining the interference bounds with minimal information exchange is crucial, which will be investigated in the next section.

\section{Decentralized Precoding for User-centric CJT \label{sectioniii}}
As discussed in the preceding section, the transmit power minimization problem at each BS can be solved distributively once the inter-cell interference bounds are determined. However, the solution may perform far worse than the global optimal solution if values of the interference bounds are set randomly. Therefore, a pivot problem arises: How to determine the interference bounds with minimal information exchange such that the decentralized precoding scheme can achieve the near-optimal performance? To address this issue, we propose a novel approach by employing DE, which is a powerful tool capable of computing specific functions of large random matrices. 

\subsection{Centralized Calculation of Interference Bounds} \label{sec:central}
First, without considering the cost of information exchange, we present a centralized method to determine the interference bounds using the UDD-based centralized optimal precoding scheme \cite{hda2010}, which was originally proposed for multi-cell non-CJT systems. 
% To reveal the structure of optimal beamformers, we begin writing the Lagrangian dual problem of (\ref{eq:opt2}) as follows:
% \begin{equation}  \label{eq:opt4}
%   \begin{aligned}
%     &\max_{\{\lambda_{ip}\}} \ \sum_{p \in \mathcal{T}} \sum_{i \in \mathcal{U}_p} \frac{\lambda_{ip}}{N_T} \sigma^2 \\
%     &\ \text{s.t.} \quad \mathbf{I} + \sum_{j \in \mathcal{U} \setminus i} \sum_{q \in \mathcal{T}_j} \frac{\lambda_{jq}}{N_T} \mathbf{h}_{jp} \mathbf{h}_{jp}^H \succeq \frac{\lambda_{ip}}{N_T \gamma_{ip}} \textbf{h}_{ip} \textbf{h}_{ip}^H  \quad \forall p \in \mathcal{T}, \forall i \in \mathcal{U}_p.
%   \end{aligned}
% \end{equation}
With the Lagrangian analysis in Appendix \ref{sec:dual}, for UE $i$ served by BS $p$, the optimal inter-cell interference bounds can be expressed as a function of global CSI as follows:
\begin{equation}  \label{eq:tau}
  \tau_{iq} = \sum_{j \in \mathcal{U}_q \setminus \{i\}} \frac{1}{N_T}  \delta_{jq} |\mathbf{h}_{iq}^H \hat{\mathbf{w}}_{jq}|^2, \quad \forall q, \forall i \in \mathcal{U}_q,
\end{equation}
\begin{equation}  \label{eq:eps}
  \epsilon_{iq} = \sum_{j \in \mathcal{U}_q } \frac{1}{N_T} \delta_{jq} |\mathbf{h}_{iq}^H \hat{\mathbf{w}}_{jq}|^2, \quad \forall q, \forall i \notin \mathcal{U}_{q}.
\end{equation}
In particular, $\delta_{ip}$ serves as the scaling factor relating $\mathbf{w}_{ip}$ with $\hat{\mathbf{w}}_{ip}$, i.e., the optimal precoding vector can be expressed as $\mathbf{w}_{ip} = \sqrt{\frac{\delta_{ip}}{N_T}} \hat{\mathbf{w}}_{ip}$ with 
\begin{equation} \label{eq:hatw}
  \hat{\mathbf{w}}_{ip} = (N_T \mathbf{I} +  \sum_{j \in \mathcal{U} \setminus \{i\}} \sum_{q \in \mathcal{T}_j} \lambda_{jq}^* \mathbf{h}_{jp} \mathbf{h}_{jp}^H)^{-1} \mathbf{h}_{ip},
\end{equation}
where the optimal Lagrangian multipliers $\{\lambda_{ip}^{*}\}$ are the unique solution of the following equation set that can be obtained by fixed-point iteration:
\begin{equation} \label{eq:lamda}
    \begin{aligned}
        \lambda_{ip} &= \frac{\gamma_{ip}}{\mathbf{h}_{ip}^H(N_T \mathbf{I} +  \sum_{j \in \mathcal{U} \setminus \{i\}} \sum_{q \in \mathcal{T}_j} \lambda_{jq} \mathbf{h}_{jp} \mathbf{h}_{jp}^H)^{-1} \mathbf{h}_{ip}}, \\
   &\qquad \qquad \qquad \qquad \qquad \qquad \qquad \quad \forall p, \forall i \in \mathcal{U}_p.
    \end{aligned}
\end{equation} 
Besides, the scaling factors $\{\delta_{ip}\}$'s can be determined such that the SINR constraints in (\ref{eq:opt22}) are achieved with equality, i.e. $\boldsymbol{\delta} = N_T \sigma^2 \mathbf{F}^{-1} \boldsymbol{1}$ with $\boldsymbol{\delta} = [\boldsymbol{\delta}_1, \boldsymbol{\delta}_2, \cdots, \boldsymbol{\delta}_p]^T$ and $\boldsymbol{\delta}_p = \{{\delta}_{ip}\}_{i \in \mathcal{U}_p}$ as shown in Appendix \ref{sec:solution}. Here,
$\boldsymbol{1}$ is the $(\sum_{p \in \mathcal{T}} U_p) $-dimensional all ones-vector and $\mathbf{F}$ is expressed as follows:
\begin{equation} \label{eq:F1}
  \mathbf{F}=\left[\begin{array}{cccc}
      \mathbf{F}^{11} & \mathbf{F}^{12} & \ldots & \mathbf{F}^{1 N_B} \\
      \mathbf{F}^{21} & \mathbf{F}^{22} & \ldots & \mathbf{F}^{2 N_B} \\
      \cdot & & & \\
      \cdot & & & \\
      \cdot & & & \\
      \mathbf{F}^{N_B 1} & \mathbf{F}^{N_B 2} & \ldots & \mathbf{F}^{N_B N_B}
      \end{array}\right],
\end{equation}
where the $(i,j)^{\text{th}}$ element of the so-called coupling matrix $\mathbf{F}^{pq}$ is defined as
\begin{equation} \label{eq:F2}
    {F}_{{ij}}^{{pq}}=\left\{\begin{array}{ll}
        \frac{1}{\gamma_{ip}}\left|\hat{\mathbf{w}}_{ip}^{H} \mathbf{h}_{ip}\right|^{2}, & \text { if } q=p \text { and } j=i, \\
        -\left|\hat{\mathbf{w}}_{jq}^{H} \mathbf{h}_{iq}\right|^{2}, & \text { if } q \in \mathcal{T} \text { and } j \in \mathcal{U}_q \setminus \{i\},\\
        0, & \text { else } (q \in \mathcal{T} \setminus \{p\}, j = i).
        \end{array}\right.
\end{equation}

% We have given a analytic expression of $\{\tau_{jq}\}$ and $\{\epsilon_{jq}\}$ with parameter $\{\lambda_{jq}\}$, $\{\delta_{jq}\}$ and $\{F^{ij}_{pq}\}$ via Lagrangian dual analysis. 
Such a method to calculate the interference bounds requires the exchange of instantaneous channel vectors among BSs, thereby incurring substantial communication expenses. As a result, by leveraging DE techniques from random matrix theory, we delineate the asymptotic behaviour of interference bounds with statistical channel information, circumventing the need for precise knowledge of individual channel vectors. In the following, we present a collection of robust approximations for the interference bounds by relying solely on the covariance matrices of global CSI. 

\subsection{Approximation of Interference Bounds}  
\label{sec:LSA}
For effective utilization of DE techniques, it is necessary to make the following widely adopted assumptions to properly depict the growing rate of system dimensions.
\begin{assumption}   \label{eq:ass2}
    $0<\lim \limits_{ N_T \to \infty} {\inf} \frac{N_c}{N_T} \leq \lim \limits_{ N_T \to \infty} {\sup} \frac{N_c}{N_T} < \infty$.
\end{assumption}
\begin{assumption}   \label{eq:ass3}
   % The spectrum norm of $\mathbf{\Theta}_{ip}$ is uniformly bounded, i.e. 
   $\lim \limits_{ N_T \to \infty} \sup \max _{\forall i, p}\left\{\left\|\boldsymbol{\Theta}_{ip}\right\|\right\}<\infty$. 
\end{assumption}
Firstly, we compute the DE of the optimal Lagrangian multipliers $\{\lambda_{ip}^{*}\}$. By applying analogous analytical techniques involving the rank-1 permutation lemma \cite{SJW1995} and the trace lemma \cite{bai1998}, we deduce the following result.
\begin{theorem} \label{eq:th1}
   Let Assumptions \ref{eq:ass2} and \ref{eq:ass3} hold. We have $\max_{i,p} |\lambda_{ip}^* - \bar{\lambda}_{ip}| \to 0$ almost surely where
   \begin{equation} \label{eq:delamda}
     \bar{\lambda}_{ip} = \frac{\gamma_{ip}}{\bar{m}_{ip}}, \quad \forall p \in \mathcal{T}, i \in \mathcal{U}_p,
   \end{equation}
    and $\bar{m}_{ip}$ is given by the unique non-negative solution to the following system equations:
   \begin{equation}  \label{eq:dem}
        \begin{aligned}
            \bar{m}_{ip} = \operatorname{Tr}\left(\boldsymbol{\Theta}_{ip} \left(\sum_{j \in \mathcal{U}} \frac{(\sum_{q \in \mathcal{T}_j } \bar{\lambda}_{jq} ) \boldsymbol{\Theta}_{jp}}{1+ (\sum_{q \in \mathcal{T}_j } \bar{\lambda}_{jq} ) \bar{m}_{jp}} + N_T \mathbf{I}\right)^{-1}\right)&, \\
            \quad \quad \quad \quad \forall p \in \mathcal{T}, i \in \mathcal{U}_p &. 
        \end{aligned} 
   \end{equation}
\end{theorem}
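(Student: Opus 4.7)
The plan is to transport the fixed-point equation \eqref{eq:lamda} for $\lambda_{ip}^{*}$ to a deterministic analogue by producing a deterministic equivalent of the quadratic form $\mathbf{h}_{ip}^{H}\mathbf{M}_{ip}^{-1}\mathbf{h}_{ip}$ in its denominator, where $\mathbf{M}_{ip} \triangleq N_{T}\mathbf{I} + \sum_{j\in\mathcal{U}\setminus\{i\}}\sum_{q\in\mathcal{T}_{j}}\lambda_{jq}^{*}\mathbf{h}_{jp}\mathbf{h}_{jp}^{H}$. Since $\mathbf{h}_{ip}=\boldsymbol{\Theta}_{ip}^{1/2}\mathbf{z}_{ip}$ with Gaussian $\mathbf{z}_{ip}$, and the summation defining $\mathbf{M}_{ip}$ explicitly excludes $j=i$, the vector $\mathbf{h}_{ip}$ is independent of the channels that build $\mathbf{M}_{ip}$; the only residual coupling is through $\lambda_{jq}^{*}$, which implicitly depends on $\mathbf{h}_{ip}$ via the joint fixed point. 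The strategy is therefore (i) establish almost-sure boundedness of $\{\lambda_{ip}^{*}\}$ using Assumption \ref{eq:ass3}, (ii) freeze the multipliers at deterministic proxies $\{\bar{\lambda}_{jq}\}$ and show the resulting perturbation is negligible, (iii) apply the standard DE machinery to the frozen matrix, and (iv) close the resulting coupled system by a monotone fixed-point argument.

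For step (iii), once $\mathbf{M}_{ip}$ is replaced by $\bar{\mathbf{M}}_{p}\triangleq N_{T}\mathbf{I}+\sum_{j\in\mathcal{U}\setminus\{i\}}(\sum_{q\in\mathcal{T}_{j}}\bar{\lambda}_{jq})\mathbf{h}_{jp}\mathbf{h}_{jp}^{H}$, the trace lemma of \cite{bai1998} yields $\mathbf{h}_{ip}^{H}\bar{\mathbf{M}}_{p}^{-1}\mathbf{h}_{ip}-\operatorname{Tr}(\boldsymbol{\Theta}_{ip}\bar{\mathbf{M}}_{p}^{-1})\xrightarrow{a.s.}0$. To obtain a closed-form deterministic equivalent for $\operatorname{Tr}(\boldsymbol{\Theta}_{ip}\bar{\mathbf{M}}_{p}^{-1})$, I would peel off the rank-one contributions $\mathbf{h}_{jp}\mathbf{h}_{jp}^{H}$ one at a time via the Sherman–Morrison identity, applying the rank-1 perturbation lemma \cite{SJW1995} to replace each removed channel's quadratic form $\mathbf{h}_{jp}^{H}\bar{\mathbf{M}}_{p,-j}^{-1}\mathbf{h}_{jp}$ by $\bar{m}_{jp}$. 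Collecting the resulting terms produces precisely the self-consistent system \eqref{eq:dem}, with scaled Stieltjes-transform-type denominators $1+(\sum_{q\in\mathcal{T}_{j}}\bar{\lambda}_{jq})\bar{m}_{jp}$. Plugging this into the fixed-point equation \eqref{eq:lamda} and using continuity of the map $m\mapsto\gamma_{ip}/m$ on a compact interval bounded away from zero yields $\bar{\lambda}_{ip}=\gamma_{ip}/\bar{m}_{ip}$.

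The main obstacle is closing the feedback between the random $\lambda_{jq}^{*}$ and the matrix $\mathbf{M}_{ip}$: the DE $\bar{\lambda}_{jq}$ is defined through $\bar{m}_{jp}$, which in turn is defined through $\bar{\lambda}_{jq}$. To handle this rigorously I would first establish that \eqref{eq:dem} admits a unique non-negative solution by casting the iteration as a standard interference function in the sense of Yates, using the monotonicity and scalability of $\bar{m}_{ip}$ in $\{\bar{\lambda}_{jq}\}$ together with Assumption \ref{eq:ass3} to secure compactness. Second, I would prove uniform convergence $\max_{i,p}|\lambda_{ip}^{*}-\bar{\lambda}_{ip}|\to 0$ almost surely by bounding the difference $|\lambda_{ip}^{*}-\bar{\lambda}_{ip}|$ through the resolvent identity $\mathbf{M}_{ip}^{-1}-\bar{\mathbf{M}}_{p}^{-1}=\mathbf{M}_{ip}^{-1}(\bar{\mathbf{M}}_{p}-\mathbf{M}_{ip})\bar{\mathbf{M}}_{p}^{-1}$, so that the perturbation is controlled by $\max_{j,q}|\lambda_{jq}^{*}-\bar{\lambda}_{jq}|$ times a uniformly bounded factor; a contraction argument (enabled by Assumption \ref{eq:ass2}) then closes the loop. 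The finitely many indices $(i,p)$ make the uniform extension of the a.s. convergence automatic via a union bound over null sets.
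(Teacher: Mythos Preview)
Your proposal is correct and follows essentially the same route as the paper: apply the trace lemma to replace $\mathbf{h}_{ip}^{H}\mathbf{M}_{ip}^{-1}\mathbf{h}_{ip}$ by $\operatorname{Tr}(\boldsymbol{\Theta}_{ip}\mathbf{M}_{ip}^{-1})$, use the rank-one perturbation lemma to restore the excluded term, invoke the deterministic-equivalent fixed point for the resulting trace, and then close the coupling between the random $\{\lambda_{ip}^{*}\}$ and their deterministic proxies. In fact your sketch of the closure step (boundedness, resolvent identity, contraction/standard-interference-function argument) is more explicit than the paper, which simply defers that part to \cite{has2019}.
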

\begin{proof}
    The key to the proof is to express the asymptotic value of the denominator in (\ref{eq:lamda}) with covariance matrix $\mathbf{\Theta}_{ip}$. First, we employ the trace lemma to express the denominator as $\operatorname{Tr} [\boldsymbol{\Theta}_{ip} (N_T \mathbf{I} +  \sum_{j \in \mathcal{U} \setminus \{i\}} \sum_{q \in \mathcal{T}_j} \lambda_{jq} \mathbf{h}_{jp} \mathbf{h}_{jp}^H)^{-1}]$, thereby eliminating $\textbf{h}_{ip}$. Second, we add the term $\lambda_{iq} \textbf{h}_{iq} \textbf{h}_{iq}^H$ to the matrix being inverted, yielding $\operatorname{Tr} [\boldsymbol{\Theta}_{ip} (N_T \mathbf{I} +  \sum_{j \in \mathcal{U}} \sum_{q \in \mathcal{T}_j} \lambda_{jq} \mathbf{h}_{jp} \mathbf{h}_{jp}^H)^{-1}]$. This operation does not change the asymptotic value according to the rank-one lemma. Finally, by employing the DE to eliminate $\textbf{h}_{jp}^H \textbf{h}_{jp}$, we compute  the asymptotic value of $\operatorname{Tr} [\boldsymbol{\Theta}_{ip} (N_T \mathbf{I} +  \sum_{j \in \mathcal{U}} \sum_{q \in \mathcal{T}_j} \lambda_{jq} \mathbf{h}_{jp} \mathbf{h}_{jp}^H)^{-1}]$. The detailed proof is available in Appendix \ref{sec:proof1}.

\end{proof}
Theorem \ref{eq:th1} firmly establishes that in the large-system limit, the asymptotic value  $\bar{\lambda}_{ip}$ is independent of the instantaneous channel vectors. Instead, it becomes wholly computable through utilizing the covariance matrix of global CSI. As such, we are able to derive the DE of individual entries of the coupling matrix $\mathbf{F}$ and scaling factors $\boldsymbol{\delta}$. By substituting $\bar{\lambda}_{ip}$ into $\hat{\mathbf{w}}_{ip}$, we have $\hat{\bar{\mathbf{w}}}_{ip}= (N_T \mathbf{I} +  \sum_{j \in \mathcal{U} \setminus \{i\}} 
\sum_{q \in \mathcal{T}_j} \bar{\lambda}_{jq} \mathbf{h}_{jp} \mathbf{h}_{jp}^H)^{-1} \mathbf{h}_{ip}$, which is an asymptotic expression of $\hat{\mathbf{w}}_{ip}$. By further substituting $\{\hat{\bar{\mathbf{w}}}_{ip}\}$ into the coupling matrix in (\ref{eq:F2}), we obtain the DE of its entries in the following theorem.
\begin{theorem} \label{eq:th2}
Let Assumptions \ref{eq:ass2} and \ref{eq:ass3} hold. Given $\bar{\lambda}_{ip}$ and 
$\bar{m}_{ip}$, $\forall p \in \mathcal{T}, \forall i \in \mathcal{U}_p$ defined in Theorem \ref{eq:th1}, we have $|F^{pq}_{ij} - \bar{F}^{pq}_{ij}| \to 0$
almost surely with 
\begin{equation}  \label{eq:deF}
    \bar{F}_{i j}^{pq}=\left\{\begin{array}{ll}
        \frac{1}{\gamma_{i p}} \bar{m}_{ip}^2, & \text { if } q=p \text { and } j=i, \\
        -\frac{1}{N_T} \frac{\bar{m}_{j,i,q}^{\prime}}{(1 + (\sum_{r \in \mathcal{T}_i} \bar{\lambda}_{ir}) \bar{m}_{iq})^2}, & \text { if } q \in \mathcal{T} \text { and } j \in \mathcal{U}_{q} \backslash i, \\
        0, & \text { else }(q \in \mathcal{T} \backslash p, j=i),
        \end{array}\right.
\end{equation}
where we have $[\bar{m}_{1,i,q}^{\prime},\bar{m}_{2,i,q}^{\prime},\cdots,\bar{m}_{N_c,i,q}^{\prime}]^T= (\mathbf{I}_{N_c} - \mathbf{L}_q)^{-1} \mathbf{u}_{iq}$, $\forall 
p \in \mathcal{T}, \forall i \in \mathcal{U}_p$, with entries of $\mathbf{L}_q$ being
\begin{equation}
  [L_q]_{hl} = \frac{1}{N_T^2} \frac{\operatorname{Tr} \left(\boldsymbol{\Theta}_{hq} \mathbf{T}_q (\sum_{r \in \mathcal{T}_l} \bar{\lambda}_{rl})^2 \boldsymbol{\Theta}_{lq} \mathbf{T}_q\right)}{[1 + (\sum_{r \in \mathcal{T}_l} \bar{\lambda}_{rl}) \bar{m}_{lq}]^2}.
\end{equation}
Besides,
\begin{equation}
\begin{array}{r}
\mathbf{u}_{i q}=\left[\frac{1}{N_T} \operatorname{Tr}\left(\boldsymbol{\Theta}_{1 q} \mathbf{T}_{q} \boldsymbol{\Theta}_{i q} \mathbf{T}_{q}\right), \frac{1}{N_T} \operatorname{Tr}\left(\boldsymbol{\Theta}_{2 q} \mathbf{T}_{q} \boldsymbol{\Theta}_{i q} \mathbf{T}_{q}\right), \cdots\right. \\
\left.\frac{1}{N_T} \operatorname{Tr}\left(\boldsymbol{\Theta}_{N_{c} q} \mathbf{T}_{q} \boldsymbol{\Theta}_{i q} \mathbf{T}_{q}\right)\right]^{T}
\end{array}
\end{equation}
with $\mathbf{T}_q=\left(\frac{1}{N_T} \sum_{k \in \mathcal{U}} \frac{(\sum_{r \in \mathcal{T}_k} \bar{\lambda}_{kr}) \boldsymbol{\Theta}_{kq}}{1 + (\sum_{r \in \mathcal{T}_k } \bar{\lambda}_{kr}) \bar{m}_{kq}} + \mathbf{I}\right)^{-1}$.
The DE of $\{F_{ij}^{pq}\}$, i.e., $\{\bar{F}_{ij}^{pq}\}$, can be employed to compute the asymptotically optimal scaling factors as $\bar{\boldsymbol{\delta}} = N_T \sigma^2 \bar{\boldsymbol{F}}^{-1} \boldsymbol{1}$.  
\end{theorem}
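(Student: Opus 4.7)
The plan is to establish the three cases of $\bar{F}^{pq}_{ij}$ separately, building directly on the machinery already developed in the proof of Theorem~\ref{eq:th1}. I would start by noting that the trivial case $q\in\mathcal{T}\setminus\{p\},\ j=i$ holds by definition. For the diagonal case $q=p,\ j=i$, I would write $|\hat{\mathbf{w}}_{ip}^H\mathbf{h}_{ip}|^2 = (\mathbf{h}_{ip}^H\mathbf{A}_{ip}^{-1}\mathbf{h}_{ip})^2$ with $\mathbf{A}_{ip} := N_T\mathbf{I} + \sum_{j\in\mathcal{U}\setminus\{i\}}\sum_{q\in\mathcal{T}_j}\lambda_{jq}^*\mathbf{h}_{jp}\mathbf{h}_{jp}^H$, invoke the trace lemma to replace the quadratic form by $\operatorname{Tr}(\boldsymbol{\Theta}_{ip}\mathbf{A}_{ip}^{-1})$, then apply the rank-one perturbation lemma to insert the missing $k=i$ term, and finally invoke Theorem~\ref{eq:th1} together with a continuous mapping argument to replace $\{\lambda_{jq}^*\}$ by $\{\bar\lambda_{jq}\}$; squaring the resulting DE gives $\bar{m}_{ip}^2/\gamma_{ip}$.

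The bulk of the work lies in the off-diagonal case $q\in\mathcal{T},\ j\in\mathcal{U}_q\setminus\{i\}$. Here I would write
\begin{equation*}
|\hat{\mathbf{w}}_{jq}^H\mathbf{h}_{iq}|^2
= \mathbf{h}_{iq}^H\mathbf{A}_{jq}^{-1}\mathbf{h}_{jq}\mathbf{h}_{jq}^H\mathbf{A}_{jq}^{-1}\mathbf{h}_{iq},
\end{equation*}
and use the Sherman--Morrison identity to extract the dependence of $\mathbf{A}_{jq}^{-1}$ on $\mathbf{h}_{iq}$ (which enters through the $k=i$ summand when $q\in\mathcal{T}_i$). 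This step produces the denominator factor $(1+(\sum_{r\in\mathcal{T}_i}\lambda_{ir}^*)\mathbf{h}_{iq}^H\tilde{\mathbf{A}}^{-1}_{ijq}\mathbf{h}_{iq})^2$ in which the core resolvent $\tilde{\mathbf{A}}_{ijq}$ is now independent of both $\mathbf{h}_{iq}$ and $\mathbf{h}_{jq}$. Applying the trace lemma to the Hermitian form in the denominator and to the residual bilinear factor in the numerator, then using the rank-one perturbation lemma to equate $\tilde{\mathbf{A}}_{ijq}^{-1}$ with the common resolvent appearing in Theorem~\ref{eq:th1}, I obtain the asymptotic expression
\begin{equation*}
|\hat{\mathbf{w}}_{jq}^H\mathbf{h}_{iq}|^2 \;\simeq\; \frac{1}{N_T}\cdot\frac{\tfrac{1}{N_T}\operatorname{Tr}(\boldsymbol{\Theta}_{iq}\mathbf{A}_q^{-1}\boldsymbol{\Theta}_{jq}\mathbf{A}_q^{-1})}{(1+(\sum_{r\in\mathcal{T}_i}\bar\lambda_{ir})\bar{m}_{iq})^2},
\end{equation*}
so the remaining task is to identify $\tfrac{1}{N_T}\operatorname{Tr}(\boldsymbol{\Theta}_{iq}\mathbf{T}_q\boldsymbol{\Theta}_{jq}\mathbf{T}_q)$-type quantities with the $\bar{m}'_{j,i,q}$ solving $(\mathbf{I}_{N_c}-\mathbf{L}_q)\bar{\mathbf{m}}'_{\cdot,i,q}=\mathbf{u}_{iq}$.

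To derive the linear system, I would adopt the standard differentiation trick from DE analysis: perturb the matrix $\mathbf{T}_q$ in the direction of $\boldsymbol{\Theta}_{iq}$ by considering $\mathbf{T}_q(t) = (\mathbf{T}_q^{-1} - t\boldsymbol{\Theta}_{iq}/N_T)^{-1}$, differentiate the fixed-point equation (\ref{eq:dem}) for $\bar{m}_{hq}(t)=\operatorname{Tr}(\boldsymbol{\Theta}_{hq}\mathbf{T}_q(t))/N_T$-type objects at $t=0$, and collect terms. The $\mathbf{T}_q^2$ that emerges from $\partial_t\mathbf{T}_q$ produces the $\mathbf{u}_{iq}$ source on the right-hand side, while differentiating the dependence of the fixed point through the $\{\bar m_{lq}\}$ themselves yields the operator $\mathbf{L}_q$ on the left; invertibility of $\mathbf{I}_{N_c}-\mathbf{L}_q$ follows from the contraction argument that already guarantees uniqueness of the fixed point in Theorem~\ref{eq:th1}. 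Finally, the claim on $\bar{\boldsymbol{\delta}}$ is immediate from continuity of matrix inversion applied to $\boldsymbol{\delta}=N_T\sigma^2\mathbf{F}^{-1}\mathbf{1}$ and the entrywise a.s.\ convergence $\mathbf{F}\to\bar{\mathbf{F}}$ just established.

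I expect the hardest step to be the bilinear-trace DE together with the rigorous justification of the linear system for $\bar{m}'_{j,i,q}$: keeping the error terms uniform in the indices $(j,i,q)$ while propagating the a.s.\ convergence through the Sherman--Morrison rearrangement, the trace/rank-one lemmas, and the continuous-mapping substitution of $\{\bar\lambda_{jq}\}$ for $\{\lambda_{jq}^*\}$ requires careful bookkeeping, especially because $\mathbf{A}_{jq}$ simultaneously depends on the random $\lambda^*$'s that are themselves defined implicitly through the fixed-point system (\ref{eq:lamda}).
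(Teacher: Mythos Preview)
Your proposal is correct and follows essentially the same route as the paper's proof in Appendix~\ref{sec:proof2}: trace lemma plus rank-one perturbation for the diagonal entries, Sherman--Morrison extraction followed by the same two lemmas for the off-diagonal entries, and the derivative-of-the-resolvent trick (perturbing the resolvent by $x\boldsymbol{\Theta}_{iq}$ and differentiating at $x=0$) to obtain the linear system $(\mathbf{I}_{N_c}-\mathbf{L}_q)\bar{\mathbf{m}}'_{\cdot,i,q}=\mathbf{u}_{iq}$. The only organizational difference is that the paper substitutes $\{\bar\lambda_{jq}\}$ for $\{\lambda^*_{jq}\}$ upfront (working with the asymptotic precoders $\hat{\bar{\mathbf{w}}}_{ip}$ from the outset) rather than deferring that replacement to a continuous-mapping step at the end as you do.
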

\begin{proof}
    % The key to proof is to decouple $\textbf{h}_{jq}$ from $\hat{\mathbf{w}}_{jq}$ by employing rank-one lemma. 
    The proof is available in Appendix \ref{sec:proof2}. 
\end{proof}
Based on the theorems established herein, we derive DE approximations for all scalar parameters associated with the interference bounds. Next, by substituting the obtained DE expressions for $\boldsymbol{\delta}$ and $F_{ij}^{pq}$ into (\ref{eq:tau}) and (\ref{eq:eps}), we approximate the interference bounds as follows:
\begin{equation} \label{eq:btau}
    \bar{\tau}_{iq} = - \frac{1}{N_T} \sum_{j \in \mathcal{U}_q \setminus \{i\}}  \bar{\delta}_{jq} \bar{F}^{qq}_{ij},  \forall q, \forall i \in \mathcal{U}_q,
\end{equation}
\begin{equation} \label{eq:beps}
    \bar{\epsilon}_{iq} = -\frac{1}{N_T} \sum_{j \in \mathcal{U}_q}  \bar{\delta}_{jq} \bar{F}^{qq}_{ij}  ,\forall q, \forall i \notin \mathcal{U}_q.
\end{equation}
It is notable that the above approximations only require the channel covariance matrices $\{\boldsymbol{\Theta}_{ip}\}$ at BSs, which can be categorized as slowly changing variables and thus can be exchanged only once for a long time. We outline the procedures for calculating $\{\bar{\tau}_{iq}\}$ and $\{\bar{\epsilon}_{iq}\}$ in Algorithm \ref{alg:alg1}.

\begin{algorithm} 
\caption{Calculation of DE of the interference bounds \label{alg:alg1}}
\begin{algorithmic}[1]
\STATE \textbf{REQUIRE:} Covariance matrices $\{\boldsymbol{\Theta}_{ip}\}$, target SINR $\{\gamma_{ip}\}$, noise variance $\sigma^2$.

\STATE Calculate the DE of $\{\lambda_{ip}^*\}$ by solving the fixed equations shown in (\ref{eq:delamda}) and (\ref{eq:dem}). 
\STATE Calculate the DE of $\boldsymbol{F}$ and $\boldsymbol{\delta}$ according to Theorem \ref{eq:th2}.
\STATE Calculate the DE of the interference bounds $\{\tau_{iq}\}$ and $\{\epsilon_{iq}\}$ according to (\ref{eq:btau}) and (\ref{eq:beps}).
% \STATE Solve the BS-based sub-problems \eqref{eq:opt3} locally to get precoding vectors $\mathbf{w}_{ip}$. 

\STATE \textbf{RETURN:} Interference approximations $\{ \bar{\tau}_{iq} \}$, $\{\bar{\epsilon}_{iq} \}$
\end{algorithmic}
\end{algorithm}

The implementation of the proposed decentralized precoding scheme in the IP-RAN architecture is shown in Fig \ref{IPRAN}, where each cell is equipped with a switch communicating with that of CN. In order to show that our proposed decentralized precoding scheme is effective in reducing the requirements of fronthaul and backhaul bandwidth and latency, we consider the following example: Suppose each BS has $64$ antennas and each user is equipped with $4$ antennas. The system bandwidth is $20$ MHz, and the subcarrier spacing is $30$ KHz. Thus, there exist $50$ resource blocks (RBs), i.e., $600$ resource elements (REs). Due to the channel correlation in adjacent RBs, we assume every $4$ RBs share the same channel matrix. Besides, each single-precision complex number requires $64$-bit data to represent. As a result, in the centralized precoding scheme, the backhaul needs to transmit $208$-Kbit CSI data for each UE, and the fronthual transmits $52$-Kbit data to distribute the precoding matrix for each UE. The whole precoding procedure, including uploading the CSI, calculating the precoding matrices at the CN, and distributing the precoding matrices, should be performed per $0.5$ ms. Instead, in our proposed decentralized scheme, only $130$-Kbit channel covariance data for a UE is transmitted in the backhaul by utilizing the conjugate property of channel covariance matrices, and the fronthaul also uses $130$-Kbit data to send the channel covariance matrix from each UE in other BSs. Moreover, the channel covariance matrices remain unchanged, and thus in principle no update is needed. In practical systems, the covariance matrices can be updated to cope with various system dynamics, e.g., user mobility and scheduling, according to the sounding period, i.e., each BS communicates with the CN only per $10 \sim 160$ ms.
\begin{figure}[t]
\centering
\includegraphics[width=3.4in]{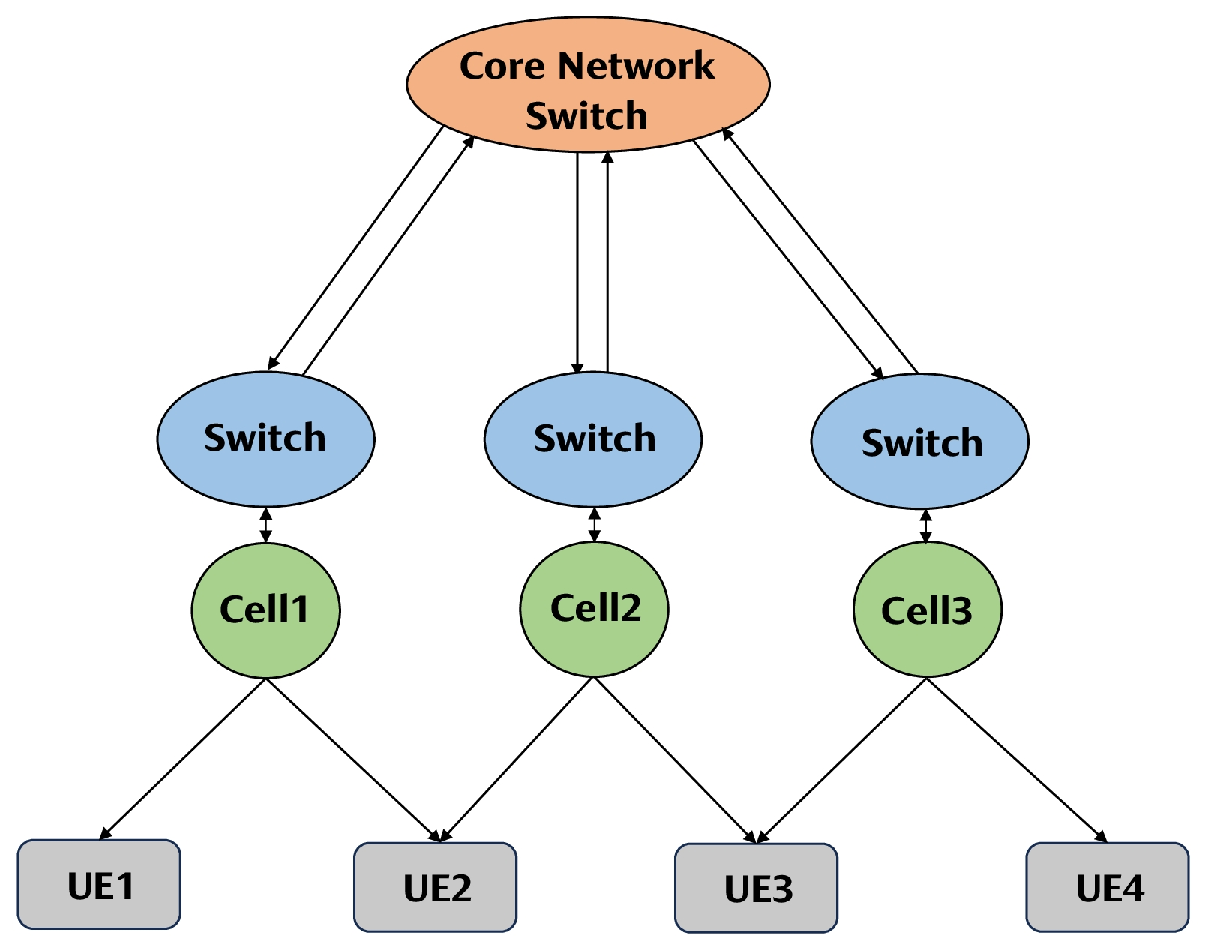}
\caption{The IP-RAN architecture for the proposed decentralized CJT precoding scheme.}
\label{IPRAN}
\end{figure}

\section{Fast ADMM-based CCCP Precoding Solver Design \label{sectioniv}}
Since the inter-cell interference bounds can be determined via DE, the coordinated beamforming problem (\ref{eq:opt2}) can be decoupled to (\ref{eq:opt3}) and solved with many approaches. Although Problem (\ref{eq:opt3}) can be transformed into an SOCP problem in (\ref{SOCP}), where the optimal solution can be found with interior point method, the computational complexity is too high for real-time execution. Therefore, considering the practical implementation restrictions, a new approach that can solve Problem (\ref{eq:opt3}) at each BS effectively and efficiently should be developed.

\subsection{Convex-concave Procedure (CCCP) for Problem Convexification}
We first rewrite the SINR constraints (\ref{eq:opt32}) as follows:
\begin{align} 
\label{DCP}
\begin{aligned}
    \gamma_{ip}\left(\sum_{j \in \mathcal{U}_p \setminus i} |\mathbf{h}_{ip}^H \mathbf{w}_{jp}|^2 + \sum_{q \in \mathcal{T}_i \setminus p} \tau_{iq} + \sum_{q \in \mathcal{T} \setminus  \mathcal{T}_i} \epsilon_{iq} + \sigma^2\right)\\
    - \left|\mathbf{h}_{ip}^H \mathbf{w}_{ip}\right|^2 \leq 0, \forall i \in \mathcal{U}_p,
\end{aligned}
\end{align}

\noindent which appears to be difference of convex (DC) \cite{ome2015} constraints. Then, by replacing the concave parts in the constraint (\ref{DCP}), i.e., $-|\mathbf{h}_{ip}^H \mathbf{w}_{ip}|^2$, with their first-order Taylor expansions, the CCCP convexifies the constraint and tackles Problem (\ref{eq:opt3}) by successively solving a series of convex subproblems \cite{mtao2016}, where a stationary point of Problem (\ref{eq:opt3}) is ensured \cite{gr2009}. In particular, in the $l$-th iteration, the following problem needs to be solved:
\begin{subequations} \label{eq:opt4*}
\begin{align}  
      \min_{\{\mathbf{w}_{ip} \}}  
    &\sum_{i \in \mathcal{U}_p} \Vert \mathbf{w}_{ip} \Vert_2^2 \label{eq:opt41} \\
      \text{s.t.}\  &\gamma_{ip}\left(\sum_{j \in \mathcal{U}_p \setminus i} |\mathbf{h}_{ip}^H \mathbf{w}_{jp}|^2 + \sum_{q \in \mathcal{T}_i \setminus p} \tau_{iq} + \sum_{q \in \mathcal{T} \setminus  \mathcal{T}_i} \epsilon_{iq} + \sigma^2\right) \notag\\
      &\ -2\Re{\{(\mathbf{w}_{ip,c}^{(l)})^{H}\mathbf{h}_{ip}\mathbf{h}_{ip}^{H}\mathbf{w}_{ip}\}}+|\mathbf{h}_{ip}^{H} \mathbf{w}_{ip}^{(l)}|^2 \leq 0,\ \forall i \in \mathcal{U}_p, \label{eq:opt42}\\
      &\ \sum_{j \in \mathcal{U}_p \setminus i } |\mathbf{h}_{ip}^H \mathbf{w}_{jp}|^2 \leq \tau_{ip}, \ \forall i \in \mathcal{U}_p, \label{eq:opt43}\\
      &\ \sum_{j \in \mathcal{U}_p} |\mathbf{h}_{ip}^H \mathbf{w}_{jp}|^2 \leq \epsilon_{ip}, \ \forall i \in \mathcal{U} \setminus \mathcal{U}_{p}, \label{eq:opt44}
\end{align}
\end{subequations}
\noindent where $\mathbf{w}_{ip}^{(l)}$ denotes the optimal solution in the previous CCCP iteration. Since Problem (\ref{eq:opt4*}) is convex, we next propose an ADMM-based solver to find a solution efficiently.

\subsection{ADMM-based Solver Design}
In order to design the ADMM-based solver, we introduce two sets of auxiliary variables as follows:
\begin{align}
A_{i,j} = \mathbf{h}_{ip}^{H}\mathbf{w}_{jp}, \forall i,j \in \mathcal{U}_{p},
\end{align}
\begin{align}
B_{i,j} = \mathbf{h}_{ip}^{H}\mathbf{w}_{jp}, \forall i \in \mathcal{U} \setminus \mathcal{U}_{p}, j \in \mathcal{U}_{p},
\end{align}

\noindent where $A_{i,j}$ denotes the received interference or signal term at user $i$ caused by the precoding vector of user $j$ (both user $i$ and $j$ are served by BS $p$), and $B_{i,j}$ denotes the received interference term at user $i$ caused by the precoding vector of user $j$ (user $i$ is not served by BS $p$ while user $j$ is served by BS $p$). Thus, Problem (\ref{eq:opt4*}) can be reformulated as follows:
\begin{subequations} \label{eq:opt5}
\begin{align}  
      &\min_{\{A_{i,j}, B_{i,j},\mathbf{w}_{ip} \}}  
    \sum_{i \in \mathcal{U}_p} \Vert \mathbf{w}_{ip} \Vert_2^2 \label{eq:opt51} \\
      \text{s.t.}\ \ &A_{i,j} - \mathbf{h}_{ip}^{H}\mathbf{w}_{jp}=0, \forall i,j \in \mathcal{U}_{p}, \label{eq:opt52}\\
      &\ B_{i,j} - \mathbf{h}_{ip}^{H}\mathbf{w}_{jp}=0, \forall i \in \mathcal{U} \setminus \mathcal{U}_{p}, j \in \mathcal{U}_{p}, \label{eq:opt53}\\
      &\gamma_{ip}\left(\sum_{j \in \mathcal{U}_p \setminus i} |A_{i,j}|^2 + \sum_{q \in \mathcal{T}_i \setminus p} \tau_{iq} + \sum_{q \in \mathcal{T} \setminus  \mathcal{T}_i} \epsilon_{iq} + \sigma^2\right) \notag\\
      &-2\Re{\{(\mathbf{w}_{ip,c}^{(l)})^{H}\mathbf{h}_{ip}A_{i,i}\}}+|\mathbf{h}_{ip}^{H} \mathbf{w}_{ip}^{(l)}|^2 \leq 0,\ \forall i \in \mathcal{U}_p, \label{eq:opt54}\\
      &\sum_{j \in \mathcal{U}_p \setminus i } |A_{i,j}|^2 \leq \tau_{ip}, \ \forall i \in \mathcal{U}_p, \label{eq:opt55}\\
      &\sum_{j \in \mathcal{U}_p} |B_{i,j}|^2 \leq \epsilon_{ip}, \ \forall i \in \mathcal{U} \setminus \mathcal{U}_{p}. \label{eq:opt56}
\end{align}
\end{subequations}

We define the feasible regions of constraint (\ref{eq:opt54}) and (\ref{eq:opt55}) as $\mathcal{E}$ and $\mathcal{F}$, respectively, and their intersection is denoted as $\mathcal{G} \triangleq \mathcal{E} \cap \mathcal{F}$. The indicator function of $\mathcal{G}$ is defined as follows:
\begin{align}
\mathbb{I}_{\mathcal{G}}(\{A_{i,j}\})= \begin{cases}0, & \text { if } \{A_{i,j}\}  \in \mathcal{G}, \\ +\infty, & \text { otherwise. }\end{cases}
\end{align}
\noindent Similarly, the feasible region of constraint (\ref{eq:opt56}) is denoted as $\mathcal{H}$ with its indicator function expressed as follows:
\begin{align}
\mathbb{I}_{\mathcal{H}}(\{B_{i,j}\})= \begin{cases}0, & \text { if } \{B_{i,j}\}  \in \mathcal{H}, \\ +\infty, & \text { otherwise. }\end{cases}
\end{align}
\noindent Therefore, Problem (\ref{eq:opt5}) can be rewritten as follows:
\begin{subequations} \label{eq:opt6}
\begin{align}  
      \min_{\{A_{i,j}, B_{i,j},\mathbf{w}_{ip} \}}  
    &\sum_{i \in \mathcal{U}_p} \Vert \mathbf{w}_{ip} \Vert_2^2 + \mathbb{I}_{\mathcal{G}}(\{A_{i,j}\}) + \mathbb{I}_{\mathcal{H}}(\{B_{i,j}\})  \label{eq:opt61} \\
      \text{s.t.}\ \ \ &A_{i,j} - \mathbf{h}_{ip}^{H}\mathbf{w}_{jp}=0, \forall i,j \in \mathcal{U}_{p}, \label{eq:opt62}\\
      &\ B_{i,j} - \mathbf{h}_{ip}^{H}\mathbf{w}_{jp}=0, \forall i \in \mathcal{U} \setminus \mathcal{U}_{p}, j \in \mathcal{U}_{p}, \label{eq:opt63}
\end{align}
\end{subequations}
\noindent and its augmented Lagrangian function is expressed as follows:
\begin{align}
\begin{aligned}
\label{augmentLA}
&\quad L(\{A_{i,j}\},\{B_{i,j}\},\{\mathbf{w}_{ip}\},\{\hat{\lambda}_{i,j}\},\{\hat{\mu}_{i,j}\}) \\
&= \sum_{i \in \mathcal{U}_p} \Vert \mathbf{w}_{ip} \Vert_2^2 + \mathbb{I}_{\mathcal{G}}(\{A_{i,j}\}) + \mathbb{I}_{\mathcal{H}}(\{B_{i,j}\}) \\
&+ \frac{\rho_1}{2}\sum_{i\in \mathcal{U}_{p}}\sum_{j\in \mathcal{U}_{p}}|A_{i,j}-\mathbf{h}_{ip}^{H}\mathbf{w}_{jp}+\hat{\lambda}_{i,j}|^{2}\\
&+ \frac{\rho_2}{2}\sum_{i \in \mathcal{U} \setminus \mathcal{U}_{p}}\sum_{j\in \mathcal{U}_{p}}|B_{i,j}-\mathbf{h}_{ip}^{H}\mathbf{w}_{jp}+\hat{\mu}_{i,j}|^{2},
\end{aligned}
\end{align}
\noindent where $\rho_1 >0$ and $\rho_2 >0$ are the penalty parameters. Besides, $\{\hat{\lambda}_{i,j}\}_{i,j\in \mathcal{U}_{p}}$ and $\{\hat{\mu}_{i,j}\}_{i \in \mathcal{U} \setminus \mathcal{U}_{p},j\in \mathcal{U}_{p}}$ denote the scaled dual variables associated with constraints (\ref{eq:opt62}) and (\ref{eq:opt63}), respectively.

The three sets of variables $\{A_{i,j}\}$, $\{B_{i,j}\}$, and $\{\mathbf{w}_{ip}\}$ can be updated alternatively according to the augmented Lagrangian function (\ref{augmentLA}). Details of updating the three sets of variables are illustrated in the following part, where $m$ denotes the ADMM iteration number.

1) Update $\{A_{i,j}\}$: The update of $\{A_{i,j}\}$ should follow
\begin{align}
\begin{aligned}
\{A_{i,j}^{(l,m+1)}\}_{i,j \in \mathcal{U}_{p}} &= \mathop{\arg \min}_{\{A_{i,j}\}}\Bigg\{\mathbb{I}_{\mathcal{G}}(\{A_{i,j}\})\\
&+\frac{\rho_1}{2}\sum_{i\in \mathcal{U}_{p}}\sum_{j\in \mathcal{U}_{p}}|A_{i,j}\!-\!\mathbf{h}_{ip}^{H}\mathbf{w}_{jp}^{(l,m)}\!+\!\lambda_{i,j}^{(l,m)}|^{2}\Bigg\},
\end{aligned}
\end{align}
\noindent which can be decomposed to $|\mathcal{U}_{p}|$ sub-problems. For each $i\in \mathcal{U}_{p}$, the sub-problem can be transformed to the following problem:
\begin{subequations} \label{eq:opt7}
\begin{align}  
      \min_{\{A_{i,j}\}_{j=1}^{|\mathcal{U}_{p}|}}  
    &\sum_{j\in \mathcal{U}_{p}}|A_{i,j}-\mathbf{h}_{ip}^{H}\mathbf{w}_{jp}^{(l,m)}+\lambda_{i,j}^{(l,m)}|^{2}  \label{eq:opt71} \\
      \text{s.t.}\ \ \ &\ \gamma_{ip}\left(\sum_{j \in \mathcal{U}_p \setminus i} |A_{i,j}|^2 + \sum_{q \in \mathcal{T}_i \setminus p} \tau_{iq} + \sum_{q \in \mathcal{T} \setminus  \mathcal{T}_i} \epsilon_{iq} + \sigma^2\right) \notag\\
      &\ -2\Re{\{(\mathbf{w}_{ip,c}^{(l)})^{H}\mathbf{h}_{ip}A_{i,i}\}}+|\mathbf{h}_{ip}^{H} \mathbf{w}_{ip}^{(l)}|^2 \leq 0, \label{eq:opt72} \\
      &\ \sum_{j \in \mathcal{U}_p \setminus i } |A_{i,j}|^2 \leq \tau_{ip}. \label{eq:opt73}
\end{align}
\end{subequations}

For this problem, the closed-form optimal solution for each $i \in \mathcal{U}_{p}$ is given as follows:
\begin{align}
A_{i,j}^{(l,m+1)}= \begin{cases}\frac{\sqrt{\tau_{ip}}\left(\mathbf{h}_{ip}^{H}\mathbf{w}_{jp}^{(l,m)}-\lambda_{i,j}^{(l,m)}\right)}{\sqrt{\sum_{j \in \mathcal{U}_p \setminus i}|\mathbf{h}_{ip}^{H}\mathbf{w}_{jp}^{(l,m)}-\lambda_{i,j}^{(l,m)}|^{2}}}, & \text { if } j \neq i, \\ \zeta_{ip}^{(l,m)}\mathbf{h}_{ip}^{H}\mathbf{w}_{jp,c}^{(l)} +\mathbf{h}_{i}^{H}\mathbf{w}_{jp}^{(l,m)}-\lambda_{i,j}^{(l,m)}, & \text { if } j = i,\end{cases}
\end{align}
\noindent where $\zeta_{i}^{(l,m)}=\Big(\gamma_{ip}(\tau_{ip} + \sum_{q \in \mathcal{T}_i \setminus p} \tau_{iq} + \sum_{q \in \mathcal{T} \setminus  \mathcal{T}_i} \epsilon_{iq} + \sigma^2)+|\mathbf{h}_{ip}^{H}\mathbf{w}_{ip,c}^{(l)}|^2-2\Re\{(\mathbf{w}_{ip,c}^{(l)})^{H}\mathbf{h}_{ip}(\mathbf{h}_{ip}^{H}\mathbf{w}_{ip}^{(l,m)}-\lambda_{i,i}^{(l,m)})\}\Big)\Big/\Big(2|\mathbf{h}_{ip}^{H}\mathbf{w}_{ip,c}^{(l)}|^2\Big)$. The detailed derivations are given in Appendix \ref{appendixupdateA}.

2) Update $\{B_{i,j}\}$: Variables $\{B_{i,j}\}$ are updated by solving the following problem:
\begin{align}
\begin{aligned}
&\{B_{i,j}^{(l,m+1)}\}_{i \in \mathcal{U} \setminus \mathcal{U}_{p}, j \in \mathcal{U}_{p}} = \mathop{\arg \min}_{\{B_{i,j}\}}\Bigg\{\mathbb{I}_{\mathcal{H}}(\{B_{i,j}\})\\
&+\frac{\rho_2}{2}\sum_{i\in \mathcal{U} \setminus \mathcal{U}_{p}}\sum_{j\in \mathcal{U}_{p}}|B_{i,j}-\mathbf{h}_{ip}^{H}\mathbf{w}_{jp}^{(l,m)}+\mu_{i,j}^{(l,m)}|^{2}\Bigg\},
\end{aligned}
\end{align}
\noindent which can be decomposed to $|\mathcal{U}|-|\mathcal{U}_{p}|$ sub-problems. For each $i\in \mathcal{U} \setminus \mathcal{U}_{p}$, the sub-problem can be transformed to the following problem:
\begin{subequations} \label{eq:opt8}
\begin{align}  
      \min_{\{B_{i,j}\}_{j=1}^{|\mathcal{U}_{p}|}}  
    &\sum_{j\in \mathcal{U}_{p}}|B_{i,j}-\mathbf{h}_{ip}^{H}\mathbf{w}_{jp}^{(l,m)}+\mu_{i,j}^{(l,m)}|^{2}  \label{eq:opt81} \\
      \text{s.t.}\ \ \ &\sum_{j \in \mathcal{U}_p} |B_{i,j}|^2 \leq \epsilon_{ip}, \label{eq:opt82}
\end{align}
\end{subequations}
\noindent which can also be solved optimally with closed-form solutions. Specifically, we define $\mathbf{B}_{i}=[B_{i,1},\cdots,B_{i,|\mathcal{U}_{p}|}]^{T}$ and $\mathbf{C}_{i}^{(l,m)}=[\mathbf{h}_{ip}^{H}\mathbf{w}_{1p}^{(l,m)}-\mu_{i,1}^{(l,m)},\cdots,\mathbf{h}_{ip}^{H}\mathbf{w}_{|\mathcal{U}_{p}|p}^{(l,m)}-\mu_{i,|\mathcal{U}_{p}|}^{(l,m)}]^{T}$. Then, Problem (\ref{eq:opt8}) is reformulated as follows:
\begin{subequations} \label{eq:opt9}
\begin{align}  
      \min_{\mathbf{B}_{i}}\  
    &||\mathbf{B}_{i}-\mathbf{C}_{i}^{(l,m)}||_2  \label{eq:opt91} \\
      \text{s.t.}\ &||\mathbf{B}_{i}||_2 \leq \sqrt{\epsilon_{ip}}, \label{eq:opt92}
\end{align}
\end{subequations}
\noindent which can be viewed as the Euclidean projection of the point $\mathbf{C}_{i}^{(l,m)}$ onto an Euclidean ball with radius $\sqrt{\epsilon_{ip}}$. Thus, the optimal solution can be simply given by the projection of $\mathbf{C}_{i}^{(l,m)}$ on an Euclidean ball, which takes the origin of coordinate as the center and has the radius of $\sqrt{\epsilon_{ip}}$, as follows:
\begin{align}
\mathbf{B}_{i}^{(l,m+1)}= \mathop{\min}\left\{\frac{\sqrt{\epsilon_{ip}}}{||\mathbf{C}_{i}^{(l,m)}||_2},1\right\} \mathbf{C}_{i}^{(l,m)}.
\end{align}

3) Update $\{\mathbf{w}_{ip}\}$: For each $i \in \mathcal{U}_{p}$, the update of $\mathbf{w}_{ip}$ is based on the following formula:
\begin{align}
\begin{aligned}
&\mathbf{w}_{ip}^{(l,m+1)} =\\
& \mathop{\arg \min}_{\mathbf{w}_{ip}}\Big\{\Vert \mathbf{w}_{ip} \Vert_2^2 + \frac{\rho_1}{2}\sum_{j\in \mathcal{U}_{p}}|A_{j,i}^{(l,m+1)}-\mathbf{h}_{jp}^{H}\mathbf{w}_{ip}+\lambda_{j,i}^{(l,m)}|^{2}\\ 
&+ \frac{\rho_2}{2}\sum_{j\in \mathcal{U} \setminus \mathcal{U}_{p}}|B_{j,i}^{(l,m+1)}-\mathbf{h}_{jp}^{H}\mathbf{w}_{ip}+\mu_{j,i}^{(l,m)}|^{2}\Big\}.
\end{aligned}
\end{align}

\noindent This problem can be solved by finding the stationary point of the objective function, which is expressed as follows:
\begin{align}
\begin{aligned}
\mathbf{w}_{ip}^{(l,m+1)}&=\left(2\mathbf{I}_{N_T}+\rho_1 \sum_{j\in \mathcal{U}_{p}} \mathbf{h}_{jp}\mathbf{h}_{jp}^{H}+ \rho_2 \sum_{j \in \mathcal{U} \setminus \mathcal{U}_{p}} \mathbf{h}_{jp}\mathbf{h}_{jp}^{H}\right)^{-1} \\
&\times \Bigg(\rho_1 \sum_{j\in \mathcal{U}_{p}} \mathbf{h}_{jp}(A_{j,i}^{(l,m+1)}+\lambda_{j,i}^{(l,m)})\\
&+\rho_2 \sum_{j \in \mathcal{U} \setminus \mathcal{U}_{p}} \mathbf{h}_{jp}(B_{j,i}^{(l,m+1)}+\mu_{j,i}^{(l,m)})\Bigg).
\end{aligned}
\end{align}
\noindent We note that $\left(2\mathbf{I}_{N_T}\!+\!\rho_1\!\sum_{j\in \mathcal{U}_{p}}\!\mathbf{h}_{jp}\mathbf{h}_{jp}^{H}\!+\!\rho_2\!\sum_{j \in \mathcal{U} \setminus \mathcal{U}_{p}}\!\mathbf{h}_{jp}\mathbf{h}_{jp}^{H}\!\right)^{-1}$ only needs to be computed once, which does not bring additional computational burden to the ADMM iterations.

Finally, the last step is to update the dual variables as follows:
\begin{align}
\begin{aligned}
&\lambda_{i,j}^{(l,m+1)}=\lambda_{i,j}^{(l,m)} + A_{i,j}^{(l,m+1)} - \mathbf{h}_{ip}^{H}\mathbf{w}_{jp}^{(l,m+1)},\ \forall i, j \in \mathcal{U}_{p},\\
&\mu_{i,j}^{(l,m+1)}=\mu_{i,j}^{(l,m)} + B_{i,j}^{(l,m+1)}- \mathbf{h}_{ip}^{H}\mathbf{w}_{jp}^{(l,m+1)}, \\
&\quad \quad \quad \quad \quad \quad \quad \quad \quad \quad \quad \quad \quad \quad \quad \forall i \in \mathcal{U} \setminus \mathcal{U}_{p}, j \in \mathcal{U}_{p}.
\end{aligned}
\end{align}

The workflow of the CCCP-ADMM solver for decentralized user-centric CJT precoding of BS $p$ is summarized in Algorithm \ref{algo-cccpadmm}. We use the local ZF method with only local CSI to initialize $\{\mathbf{w}_{ip,c}^{(1)}\}$ of Algorithm \ref{algo-cccpadmm}. When the ADMM algorithm in each CCCP iteration starts, the solution obtained in the previous CCCP iteration is used to initialize $\{\mathbf{w}_{ip}^{(l,1)}\}$.
\begin{algorithm}[htpb]
\caption{CCCP-ADMM Solver for Decentralized CJT Precoding of BS $p$ \label{algo-cccpadmm}}
{\bf Input:}
Local CSI of BS $p$ $\mathbf{h}_{ip}$, $\forall i \in \mathcal{U}_{p}$, target SINR $\gamma_{ip}$, $\forall i \in \mathcal{U}_q$, interference bounds for BS $p$ $\tau_{ip}$, $\forall i \in \mathcal{U}_{p}$, and $\epsilon_{ip}$, $\forall i \in \mathcal{U} \setminus \mathcal{U}_{p}$, interference bounds from other BSs $\tau_{iq}$, $\forall i \in \mathcal{U}_{p}$, $\forall q \in \mathcal{T}_{i} \setminus p$, and $\epsilon_{iq}$, $\forall i \in \mathcal{U}_{p}$, $\forall q \in \mathcal{T} \setminus \mathcal{T}_{i}$, AWGN noise variance $\sigma^2_{i}$, $\forall i \in \mathcal{U}_{p}$, regularization parameters $\rho_1$ and $\rho_2$, maximum number of CCCP iterations $Q_{1}$, maximum number of ADMM iterations $Q_{2}$, and accuracy tolerance $\epsilon$.\\
{\bf Output:}
Local precoding vectors $\mathbf{w}_{ip}=\mathbf{w}_{ip,c}^{(l+1)}$, $\forall i \in \mathcal{U}_{p}$.\\
{\bf Initialize:}
$l \leftarrow 1$, $\mathbf{w}_{ip,c}^{(1)}=\mathbf{w}_{ip}^{ZF}$, $\forall i \in \mathcal{U}_{p}$.
\begin{algorithmic}[1]
\STATE $\mathbf{R}_{p}=\left(2\mathbf{I}_{N_T}\!+\!\rho_1 \sum_{j\in \mathcal{U}_{p}} \mathbf{h}_{jp}\mathbf{h}_{jp}^{H}\!+\!\rho_2 \sum_{j \in \mathcal{U} \setminus \mathcal{U}_{p}} \mathbf{h}_{jp}\mathbf{h}_{jp}^{H}\right)^{-1}$
\WHILE{{$l \leq Q_{1}$} \text{and} {$\frac{\sum_{i}||\mathbf{w}_{ip,c}^{(l)}-\mathbf{w}_{ip,c}^{(l-1)}||_{2}^{2}}{\sum_{i}||\mathbf{w}_{ip,c}^{(l-1)}||_{2}^{2}} > \epsilon$}}
\STATE {\bf Initialize:} $m \leftarrow 1$, $\lambda_{i,j}^{(l,1)}=0$, $\forall i, j \in \mathcal{U}_{p}$, $\mu_{i,j}^{(l,1)}=0$, 
\Statex \quad $\forall i \in \mathcal{U} \setminus \mathcal{U}_{p}, j \in \mathcal{U}_{p}$, $\mathbf{w}_{ip}^{(l,1)}=\mathbf{w}_{ip,c}^{(l)}$, $\forall i \in \mathcal{U}_{p}$.
\WHILE{{$m \leq Q_{2}$}}
\STATE $\zeta_{i}^{(l,m)}=\Big(\gamma_{ip}(\tau_{ip} + \sum_{q \in \mathcal{T}_i \setminus p} \tau_{iq} + \sum_{q \in \mathcal{T} \setminus  \mathcal{T}_i} \epsilon_{iq}+ \sigma^2) $
\Statex \quad \quad $+|\mathbf{h}_{ip}^{H}\mathbf{w}_{ip,c}^{(l)}|^2-2\Re\{(\mathbf{w}_{ip,c}^{(l)})^{H}\mathbf{h}_{ip}(\mathbf{h}_{ip}^{H}\mathbf{w}_{ip}^{(l,m)}$
\Statex \quad \quad $-\lambda_{i,i}^{(l,m)})\}\Big)\Big/\Big(2|\mathbf{h}_{ip}^{H}\mathbf{w}_{ip,c}^{(l)}|^2\Big)$, $\forall i \in \mathcal{U}_{p}$
\STATE $A_{i,j}^{(l,m+1)}\!=\!\begin{cases}\frac{\sqrt{\tau_{ip}}\left(\mathbf{h}_{ip}^{H}\mathbf{w}_{jp}^{(l,m)}-\lambda_{i,j}^{(l,m)}\right)}{\sqrt{\sum_{j \in \mathcal{U}_p \setminus i}|\mathbf{h}_{ip}^{H}\mathbf{w}_{jp}^{(l,m)}-\lambda_{i,j}^{(l,m)}|^{2}}}, \text { if }\!j\!\neq\!i, \\ \!\zeta_{ip}^{(l,m)}\mathbf{h}_{ip}^{H}\mathbf{w}_{jp,c}^{(l)} \!+\!\mathbf{h}_{i}^{H}\mathbf{w}_{jp}^{(l,m)}\!-\!\lambda_{i,j}^{(l,m)}\!,\!\text { if }\!j\!=\!i,\end{cases}$ 
\Statex \quad \quad \quad \quad \quad \quad \quad \quad \quad \quad \quad \quad \quad \quad \quad \quad \quad \quad \quad $\forall i,j \in \mathcal{U}_{p}$
\STATE $\mathbf{B}_{i}^{(l,m+1)}= \mathop{\min}\left\{\frac{\sqrt{\epsilon_{ip}}}{||\mathbf{C}_{i}^{(l,m)}||_2}, 1\right\} \mathbf{C}_{i}^{(l,m)}$, $\forall i \in \mathcal{U}_{p}$, with 
\Statex \quad \quad $\mathbf{B}_{i}=[B_{i,1},\cdots,B_{i,|\mathcal{U}_{p}|}]^{T}$ and $\mathbf{C}_{i}^{(l,m)}=[\mathbf{h}_{ip}^{H}\mathbf{w}_{1p}^{(l,m)}$
\Statex \quad \quad $-\mu_{i,1}^{(l,m)},\cdots,\mathbf{h}_{ip}^{H}\mathbf{w}_{|\mathcal{U}_{p}|p}^{(l,m)}-\mu_{i,|\mathcal{U}_{p}|}^{(l,m)}]^{T}$
\STATE $\mathbf{w}_{ip}^{(l,m+1)} = \mathbf{R}_{p} \Bigg(\rho_1 \sum_{j\in \mathcal{U}_{p}} \mathbf{h}_{jp}(A_{j,i}^{(l,m+1)}+\lambda_{j,i}^{(l,m)})$
\Statex \quad \quad $+\rho_2 \sum_{j \in \mathcal{U} \setminus \mathcal{U}_{p}} \mathbf{h}_{jp}(B_{j,i}^{(l,m+1)}+\mu_{j,i}^{(l,m)})\Bigg)$, $\forall i \in \mathcal{U}_{p}$
\STATE $\lambda_{i,j}^{(l,m+1)}=\lambda_{i,j}^{(l,m)} + A_{i,j}^{(l,m+1)} - \mathbf{h}_{ip}^{H}\mathbf{w}_{jp}^{(l,m+1)},\ \forall i, j \in \mathcal{U}_{p}$
\STATE $\mu_{i,j}^{(l,m+1)}=\mu_{i,j}^{(l,m)} + B_{i,j}^{(l,m+1)}- \mathbf{h}_{ip}^{H}\mathbf{w}_{jp}^{(l,m+1)}$,
\Statex \quad \quad \quad \quad \quad \quad \quad \quad \quad \quad \quad \quad \quad \quad \quad $\forall i \in \mathcal{U} \setminus \mathcal{U}_{p}, j \in \mathcal{U}_{p}$
\STATE $m \leftarrow m+1$
\ENDWHILE
\STATE $\mathbf{w}_{ip,c}^{(l+1)}=\mathbf{w}_{ip}^{(l,m)}$, $\forall i \in \mathcal{U}_{p}$
\STATE $l \leftarrow l+1$
\ENDWHILE
\end{algorithmic}
\end{algorithm}

\begin{remark}
The ADMM-based solver can converge to the global optimum of Problem (\ref{eq:opt5}) \cite{dp1989}. Besides, the computational complexity of one ADMM iteration is given as $\mathcal{O}(N_{T}^3)$. Moreover, in simulations, we find that by adapting the regularization parameters $\rho_1$ and $\rho_2$, our proposed ADMM solver is very efficient and no iteration is required (i.e., $Q_1 = Q_2 =1$) to achieve the near-optimal performance. Instead, the off-the-shelf SOCP solver in the CVX toolbox \cite{cvx} applies the interior point method with many iterations.
\end{remark}

\section{Simulation Results \label{sectionv}}
We simulate a multi-cell downlink celluar network with $N_B=3$ BSs and $N_c=20$ UEs, and the UEs are uniformly distributed in the coverage area of BSs. In particular, BS 1, BS 2, and BS 3 serve UEs 1-10, UEs 6-15, and UEs 11-20, respectively, which means that UEs 6-15 are served by two BSs simutaneously. The channel vectors $\{\mathbf{h}_{ip}\}$ are generated from the Quasi Deterministic Radio Channel Generator (QuaDRiGa) \cite{fbu2014}, which is calibrated against the 3rd Generation Partnership Project (3GPP) channel model. The noise variance is the same for all UEs, which is given by $\sigma^2=10^{\frac{1}{U_p} \sum_{p \in \mathcal{T}}\sum_{i \in \mathcal{U}_p} \log _{10}\left\|\mathbf{h}_{ip}\right\|_2^2} \times 10^{-\frac{\mathrm{SNR}}{10}}$ and $\mathrm{SNR} = 20$ dB is the average receive SNR for all users without precoding. Besides, $Q_1 = Q_2 =1$, $\rho_1=\rho_2=0.5$, and the accuracy
tolerance is $\epsilon=10^{-8}$. Our simulation results are averaged over 100 random channel realizations. Apart from our proposed low-complexity ADMM solver for the sub-problem at each BS, we also use the SOCP solver from the CVX toolbox for comparison. Besides, we simulate two baseline methods and a performance upper bound as follows:
\begin{itemize}
    \item \textbf{ZF-based decentralized precoding:} Each BS performs local ZF without utilizing the instantaneous CSI from other BSs.
\end{itemize}
\begin{itemize}
    \item \textbf{ZF-based centralized precoding:} A central unit collects the instantaneous CSI from all BSs and performs ZF. The precoding matrix is then distributed to each BS.
\end{itemize}
\begin{itemize}
    \item \textbf{UDD-based centralized precoding:} This scheme also requires the instantaneous CSI from all BSs and then generates the precoding matrices with the method presented in Section \ref{sec:central}. Furthermore, the target SINR constraints are generated from the WMMSE algorithm \cite{qshi2011}, i.e., using the WMMSE algorithm to generate the precoder and calculate the corresponding approximate SINR in (\ref{eq:SINR}). This can be viewed as a performance upper bound.
\end{itemize}

We evaluate two performance metrics namely the total transmit power of all BSs and the sum rate of all users. Note that the sum rate of all users are calculated based on the original SINR formulation in (\ref{eq:originalSINR}). When comparing the total transmit power, we ensure that the sum rate of all schemes are the same by normalizing the precoders to a specific total transmit power. Instead, when we are interested in the sum rate, a fixed total transmit power is required, e.g., all precoders are normalized to $10$ Watt in total power.

We first evaluate the total transmit power of all BSs versus the number of antennas at each BS in Fig. \ref{powerfig}. For convenience of display, the total transmit power of the ZF-based centralized precoding scheme is normalized to $10$ Watt for all sets of BS antenna numbers, which is the flat constant curve and used as a baseline such that the sum rate of all other precoding schemes can be tuned via total transmit power normalization to the same with the ZF-based centralized precoding scheme for each BS antenna number. It is seen that the ZF-based decentralized precoding scheme consumes much more power than other schemes to achieve a target sum rate, which demonstrates the performance gain brought by the coordination among BSs. In particular, when the number of BS antenna is smaller than that of the serving UE, i.e., $N_T=8$, the ZF-based decentralized precoding scheme cannot properly operate due to insufficient degree of freedom, and the performance gap becomes smaller with the number of BS antennas since more BS antennas enhance the performance of the ZF-based decentralized precoding scheme and diminish the contribution from coordination. Besides, compared with the ZF-based centralized precoding scheme, the proposed decentralized precoding scheme with the SOCP solver achieves much lower power consumption, and performs close to the lower bound achieved by the UDD-based precoder. This demonstrates the effectiveness of our proposed decentralized precoding scheme for CJT with only a little exchanged information by determining the interference bounds using the covariance matrices of global CSI via DE. Besides, the low-complexity ADMM solver without needing any iteration consumes at most $7\%$ more power compared with the SOCP solver, which proves its effectiveness in maintaining similar performance with the SOCP solver while greatly reducing computational complexity.
\begin{figure}[t]
\centering
\includegraphics[width=3.4in]{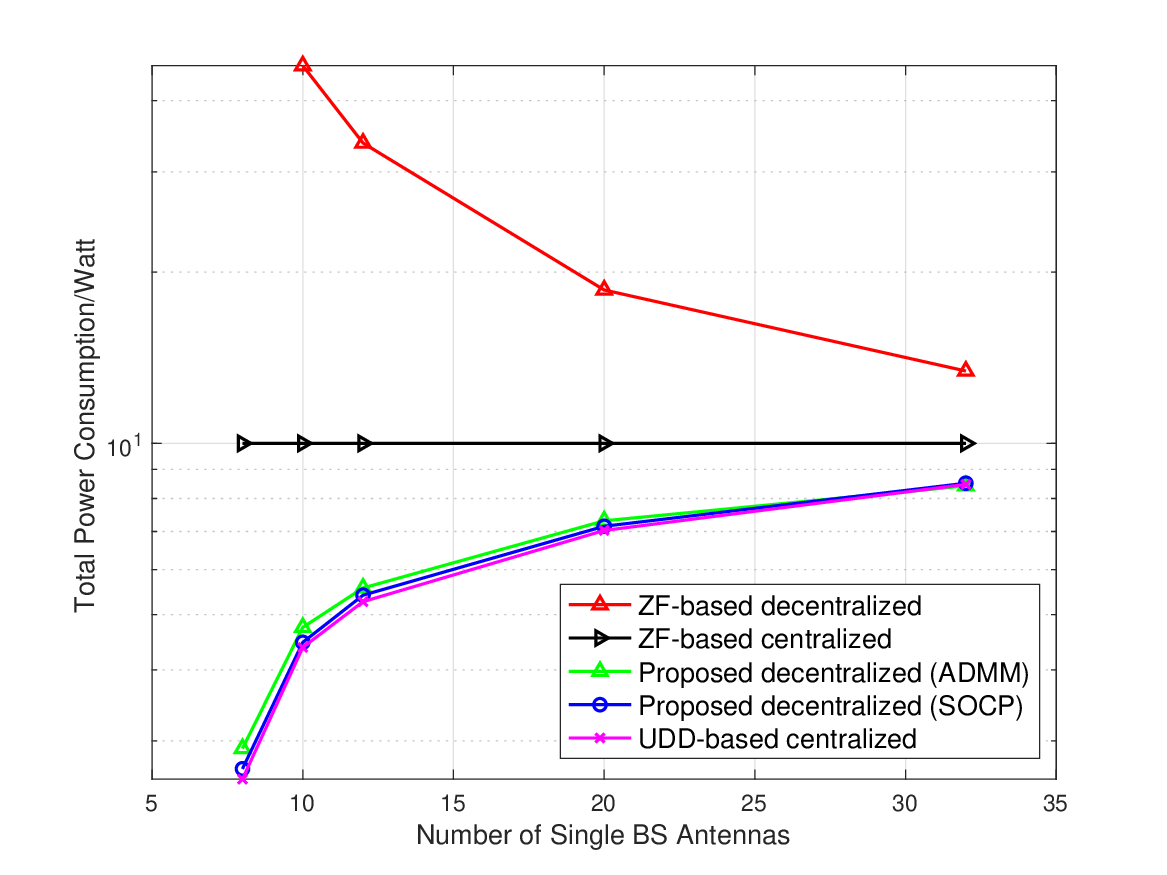}
\caption{Total power consumption versus the number of antennas at each BS.}
\label{powerfig}
\end{figure}

The sum rate of all users versus the number of antennas at each BS is shown in Fig. \ref{sumratefig}. Similar to the transmit power evaluation, the ZF-based decentralized precoding scheme presents the most inferior performance due to the lack of BS coordination. The proposed decentralized precoding schemes outperform the ZF-based centralized precoding scheme by $8\% \sim 57\%$ in terms of the sum rate. In particular, the decentralized precoding scheme with the SOCP solver achieves similar performance as that of the UDD-based optimal centralized precoding scheme. Besides, replacing the SOCP solver with the low-complexity ADMM solver only incurs no more than $5\%$. This again validates the effectiveness of the proposed decentralized precoding scheme by determining the interference bounds using the covariance matrices of global CSI via DE and the benefits of applying the ADMM solver.
\begin{figure}[t]
\centering
\includegraphics[width=3.4in]{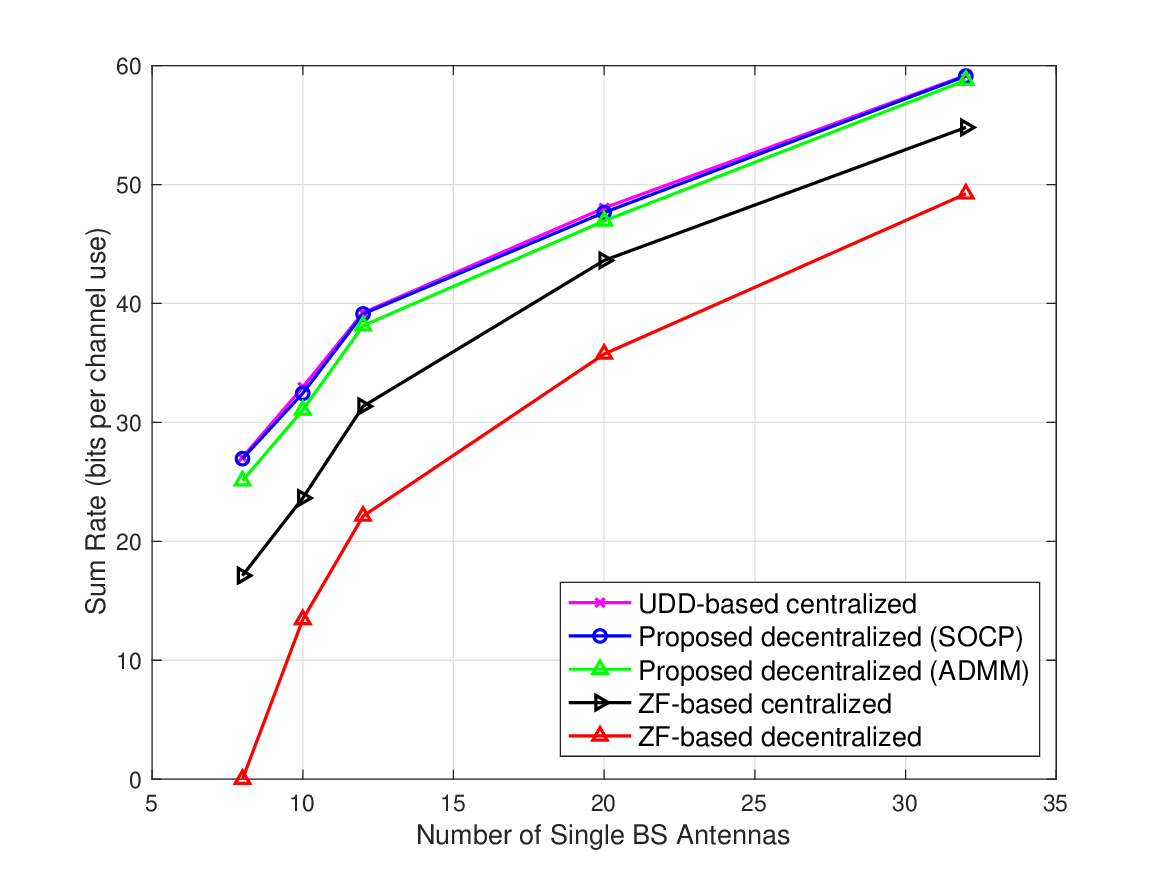}
\caption{Sum rate versus the number of antennas at each BS.}
\label{sumratefig}
\end{figure}

Since the proposed decentralized precoding scheme depends on the interference bounds and the target SINR constraints, we next evaluate the robustness of the proposed scheme with respect to the two parameters in Figs. \ref{robustinter} and  \ref{robustsinr}, where the number of BS antennas is $12$. In Fig. \ref{robustinter}, the interference bounds are scaled with a scalar $\alpha$ ranging from $0.2$ to $10$. It is observed that the sum rate of the proposed decentralized scheme with either the SOCP or ADMM solver decreases when the interference bounds are offset from the precise estimation obtained from DE. However, the proposed decentralized scheme still outperforms the ZF-based centralized precoding scheme. Besides, with the ADMM solver, the proposed decentralized precoding scheme achieves the near-optimal performance when the interference bounds are underestimated. However, adopting the SOCP solver appears to be a better option when interference bounds are overestimated. Similarly, the target SINR constraints are scaled with a factor $\beta$ ranging from $0.1$ to $10$, and the sum rate versus the scale factor $\beta$ for the target SINR constraints is shown in Fig. \ref{robustsinr}. It is seen that although there exists a performance degradation, the sum rate of the proposed decentralized scheme with either of the two solvers is larger than the ZF-based centralized precoding. In addition, the ADMM solver is much more robust than the SOCP solver when the target SINR is less than the optimal value. These results show that the proposed decentralized precoding scheme can maintain its superiority even when the interference bounds are wrongly estimated to some extent or the target SINR constraints from the WMMSE algorithm cannot be obtained, e.g., they may be obtained from the user scheduling information in practical systems, which demonstrates its robustness to the two input values, i.e., the interference bounds and the SINR constraints.
\begin{figure}[t]
\centering
\includegraphics[width=3.4in]{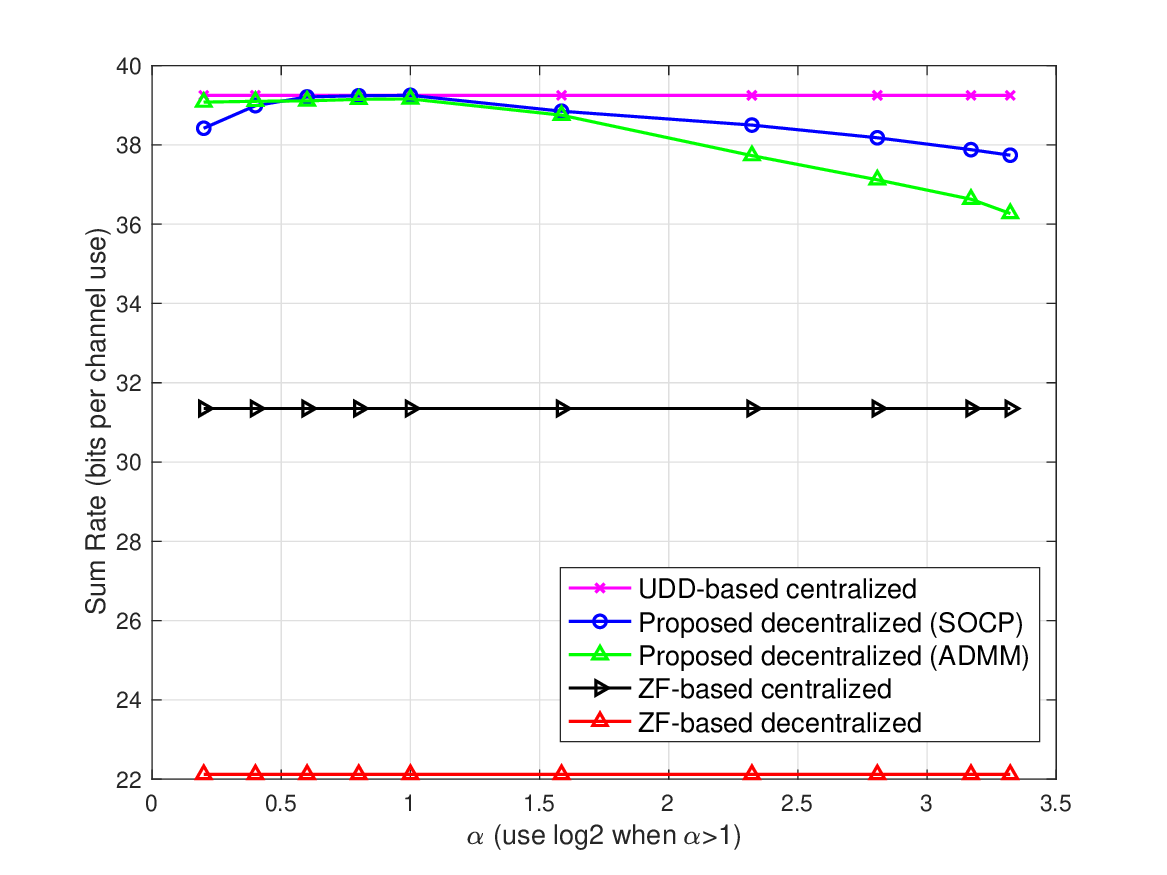}
\caption{Sum rate versus the scale factor of the interference bounds.}
\label{robustinter}
\end{figure}
\begin{figure}[t]
\centering
\includegraphics[width=3.4in]{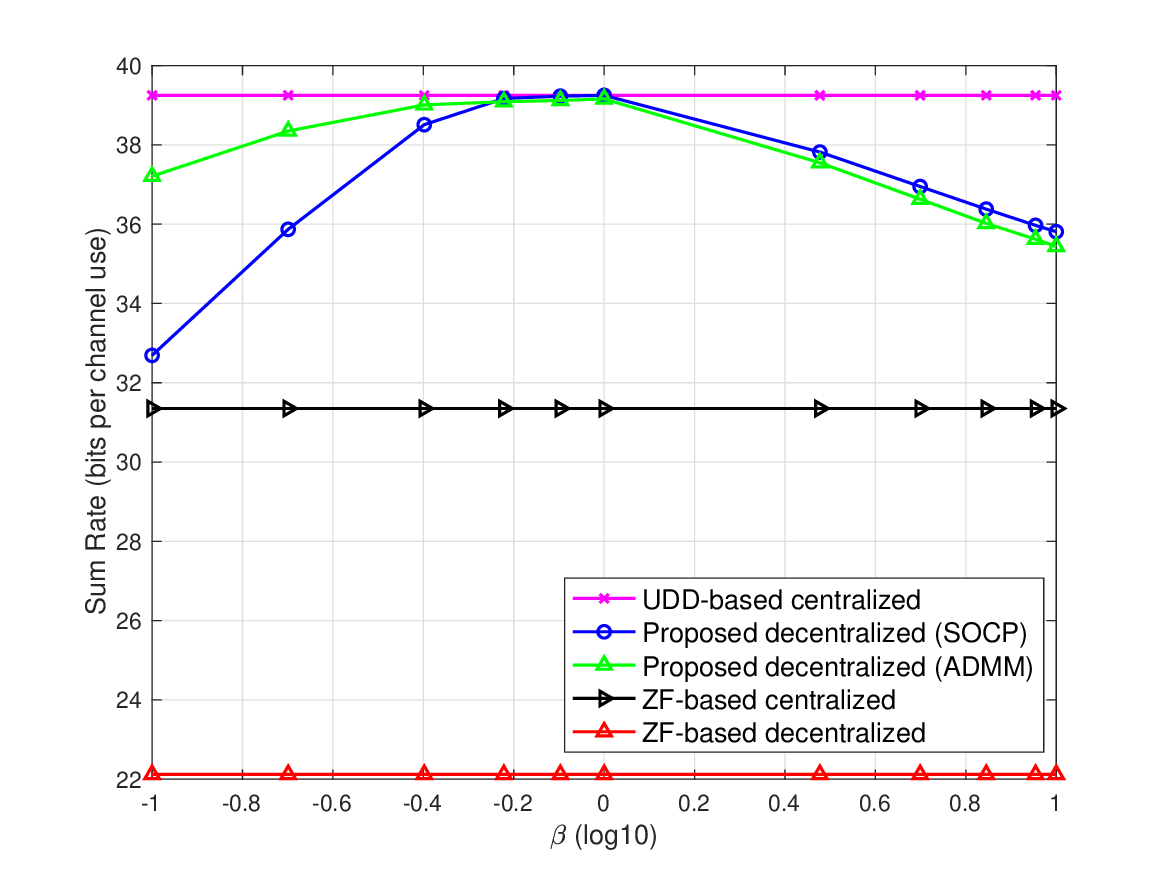}
\caption{Sum rate versus the scale factor of the target SINR.}
\label{robustsinr}
\end{figure}

We finally examine the computational complexity of all precoding schemes by comparing the average simulation running time on an HP Elitebook laptop (Model: 840 G6, i7-8665U Chip with 16GB RAM) in Table \ref{tabletime}. The number of BS antennas is also fixed as $12$. It is seen that the average running time of our proposed decentralized precoding scheme with the ADMM solver is only $1.7$ times of that of the ZF-based centralized precoding scheme, and is much less than that of the UDD-based precoder. However, the complexity of the ADMM solver is much lower in practice since the decentralized scheme allows all BSs to compute the precoding matrix of each BS in parallel, while the ZF-based centralized precoding can only calculate the precoding matrices of all BSs at a central unit. This is also valid for the ZF-based decentralized precoding scheme, which achieves the lowest complexity among all schemes. Besides, the SOCP solver bears a heavy computational complexity since it solves the problem with the complex interior point method, and the ADMM solver secures a $79\%$ complexity reduction compared with the SOCP solver. The above numerical results verify that the proposed decentralized precoding scheme with the ADMM solver can achieve near-optimal performance with extremely low complexity.
\begin{table}[t]
\caption{Average simulation running time of all precoding schemes}
\centering
\begin{tabular}{c|c}
    \hline
     Scheme & Average simulation running time (s)  \\
    \hline
    ZF-based decentralized & 0.47  \\
    \hline
    ZF-based centralized & 0.82 \\
    \hline
    UDD-based & 4.26 \\
    \hline
    Proposed (SOCP) & 6.88\\
    \hline
    Proposed (ADMM) & 1.41\\
    \hline
    \end{tabular}
\label{tabletime}
\end{table}

\section{Conclusions \label{sectionvi}}
In this paper, we proposed a decentralized precoding scheme for coherent joint transmission. In order to decouple the transmit power minimization problem at each base station, we first determined the bounds of the inter-cell interference by utilizing the deterministic equivalents from random matrix theory, which is solely based on statistical channel information, i.e., the covariance matrices of global channel state information. Subsequently, considering the heavy complexity of the off-the-shelf solver, we further developed a fast ADMM-based solver for each BS-based sub-problem, where the convex-concave procedure is utilized. Simulation results verified the effectiveness of our proposed decentralized precoding scheme and the corresponding solver, which achieves the near-optimal performance with low computational complexity. Future works can be focused on extending the proposed decentralized precoding scheme to the multi-stream scenario. In addition, the coordination base station selection problem can also be investigated to provide a guideline for the practical deployment of coherent joint transmission.

\appendices
\section{Lagrangian dual analysis for Problem (\ref{eq:opt2})} \label{sec:dual}
We first show the equivalence of Problem (\ref{eq:opt1}) and Problem (\ref{eq:opt2}) by proving that the equalities in (\ref{eq:opt23}) and (\ref{eq:opt24}) hold at the optimal solution of Problem (\ref{eq:opt2}). To do so, we write the Lagrangian function of (\ref{eq:opt2}) as follows:
\begin{equation} \label{eq:L1}
    \begin{aligned}
     &L\left(\{\textbf{w}_{ip}\}, \{\lambda_{ip}\}, \{\alpha_{iq}\}, \{\beta_{iq}\}, \{\epsilon_{iq}\}, \{\tau_{iq}\}\right) \\
     &= \sum_{p \in \mathcal{T}} \sum_{i \in \mathcal{U}_p} \Vert \textbf{w}_{ip}\Vert^2 + \sum_{q \in \mathcal{T}} \sum_{i \in \mathcal{U}_q} \alpha_{iq} \left(\sum_{j \in \mathcal{U}_q \setminus i } |\mathbf{h}_{iq}^H \mathbf{w}_{jq}|^2 - \tau_{iq}\right) \\
     &\!+\!\sum_{q \in \mathcal{T}} \sum_{i \notin \mathcal{U}_q} \beta_{iq} \left(\sum_{j \in \mathcal{U}_q} |\mathbf{h}_{iq}^H \mathbf{w}_{jq}|^2 - \epsilon_{iq}\right)\!-\!\sum_{p \in \mathcal{T}} \sum_{i \in \mathcal{U}_p}
    \frac{\lambda_{ip}}{N_T } \Bigg[\frac{|\textbf{h}_{ip}^H \textbf{w}_{ip}|^2}{\gamma_{ip}} \\
    & - \sum_{j \in \mathcal{U}_p \setminus i} |\textbf{h}_{ip}^H \textbf{w}_{jp}|^2 - \sum_{q \in \mathcal{T}_i \setminus p} \tau_{iq} - \sum_{q \in \mathcal{T} \setminus \mathcal{T}_i} \epsilon_{iq} - \sigma^2\Bigg],
    \end{aligned}
\end{equation}
where the Lagrangian dual variables $\{\lambda_{ip}\}$ correspond to SINR constraints, and $\{ \alpha_{iq}, \beta_{iq}\}$ are associated with interference constraints.
% We firstly use proof by contradiction to show that $\lambda_{ip} \neq 0, \forall p \in \mathcal{T}, \forall i \in \mathcal{U}_p$.
Assuming $\lambda_{ip} = 0$, the KKT condition for $\textbf{w}_{ip}$ becomes
\begin{equation} \label{eq:kkt1}
    \begin{aligned}
       \Bigg(\textbf{I} + \sum_{j \in \mathcal{U}_p \setminus i} \alpha_{jp} \textbf{h}_{jp} \textbf{h}_{jp}^H + \sum_{j \notin \mathcal{U}_p}   &\beta_{jp} \textbf{h}_{jp} \textbf{h}_{jp}^H  \\
       & + \sum_{j \in \mathcal{U}_p \setminus i} \frac{\lambda_{jp}}{N_T} \textbf{h}_{jp} \textbf{h}_{jp}^H \Bigg) \textbf{w}_{ip} = 0, 
    \end{aligned}
\end{equation}
Due to the non-negativity of $\{\alpha_{iq}\}, \{\beta_{iq}\},  \{\lambda_{ip}\}$, the unique solution to (\ref{eq:kkt1}) is $\textbf{w}_{ip} = \boldsymbol{0}$, which contradicts the SINR constraints in (\ref{eq:opt22}). Therefore, $\lambda_{ip}$ must be positive. Assuming $\alpha_{iq} = 0$, the coefficient of $\tau_{iq}$ in (\ref{eq:L1}) is $\sum_{p \in \mathcal{T}_i \setminus q} \lambda_{ip}$. Since $\lambda_{ip} >0,  \forall p \in \mathcal{T}, \forall i \in \mathcal{U}_{p}$, minimizing $\epsilon_{iq}$ over $\epsilon_{iq} \geq 0$ yields $\epsilon_{iq} = 0$, which contradicts constraints in (\ref{eq:opt23}). Hence, we conclude that $\alpha_{iq}$ must be positive. Following similar statements, $\beta_{iq}$ is also positive. Based on complementary slackness conditions, constraints (\ref{eq:opt23}) and (\ref{eq:opt24}) hold, indicating Problem (\ref{eq:opt1}) and Problem (\ref{eq:opt2}) are equivalent.

% Based on complementary slackness conditions, it follows that either $\{ \alpha_{iq}, \beta_{iq}\}$ or the term within the parentheses must equate to zero. Nevertheless, when $\{ \alpha_{iq}, \beta_{iq}\}$ are both set to zero, the variables $\{\tau_{iq}, \epsilon_{iq}\}$ become unconstrained, leading to  $\min L(\textbf{w}_{ip},\lambda_{ip}, \epsilon_{iq}, \tau_{iq}) = - \infty$.
% This indicates that $\{\alpha_{iq}, \beta_{iq}\}$ must be non-zero, thereby leading to the equalities in (\ref{eq:opt23}) and (\ref{eq:opt24}), This also indicates that the dual problems of Problem (\ref{eq:opt1}) and Problem (\ref{eq:opt2}) are equivalent. 

Next, we derive the dual problem of Problem (\ref{eq:opt1}) and Problem (\ref{eq:opt2}). In particular, $L(\mathbf{w}_{ip}, \lambda_{ip})$ is defined by substituting $\tau_{iq}, \epsilon_{iq} $ with $\sum_{j \in \mathcal{U}_q \setminus i } |\mathbf{h}_{iq}^H \mathbf{w}_{jq}|^2$, $\sum_{j \in \mathcal{U}_q} |\mathbf{h}_{iq}^H \mathbf{w}_{jq}|^2$ in (\ref{eq:L1}). By rearranging terms in $ L(\mathbf{w}_{ip}, \lambda_{ip})$, we obtain
\begin{equation} \label{eq:L2}
    \begin{aligned}
        L(\mathbf{w}_{ip}, \lambda_{ip}) = \sum_{p \in \mathcal{T}} &\sum_{i \in \mathcal{U}_p} \Bigg[\mathbf{w}_{ip} \Bigg(\mathbf{I} - \frac{\lambda_{ip}}{N_T \gamma_{ip}} \mathbf{h}_{ip}^H h_{ip}  \\
        &+ \sum_{j \in \mathcal{U} \setminus i} \sum_{q \in \mathcal{T}_j} \frac{\lambda_{jq}}{N_T} \mathbf{h}_{jp} \mathbf{h}_{jp}^H \Bigg)\mathbf{w}_{ip}^H + \frac{\lambda_{ip}}{N_T} \sigma^2 \Bigg]. 
    \end{aligned}
\end{equation}
The dual objective function is defined as $g(\lambda_{ip}) = \min_{\textbf{w}_{ip} } L(\mathbf{w}_{ip}, \lambda_{ip})$. If the matrix within the parentheses is not semi-positive definite matrix, we have $g(\lambda_{ip}) = \infty$. Therefore, we get the Lagrangian dual problem of ($\ref{eq:opt1}$) and ($\ref{eq:opt2}$) as follows:
\begin{equation}  \label{eq:opt4}
  \begin{aligned}
    &\max_{\{\lambda_{ip}\}} \ \sum_{p \in \mathcal{T}} \sum_{i \in \mathcal{U}_p} \frac{\lambda_{ip}}{N_T} \sigma^2 \\
    &\ \text{s.t.} \quad \mathbf{I} + \sum_{j \in \mathcal{U} \setminus i} \sum_{q \in \mathcal{T}_j} \frac{\lambda_{jq}}{N_T} \mathbf{h}_{jp} \mathbf{h}_{jp}^H \succeq \frac{\lambda_{ip}}{N_T \gamma_{ip}} \textbf{h}_{ip} \textbf{h}_{ip}^H,  \\
    &\qquad \qquad \qquad \qquad \qquad \qquad \quad\forall p \in \mathcal{T}, \forall i \in \mathcal{U}_p.
  \end{aligned}
\end{equation}

\section{Solution of prime problem (\ref{eq:opt2}) and dual problem (\ref{eq:opt4})} \label{sec:solution}
We first compute the gradient of (\ref{eq:L2}) with respect to $\textbf{w}_{ip}$ and equal it to zero, yielding:
\begin{equation} \label{eq:da1}
    \boldsymbol{\Sigma}_{ip} \textbf{w}_{ip} = \frac{\lambda_{ip}}{\gamma_{ip}} \textbf{h}_{ip} \textbf{h}_{ip}^H \textbf{w}_{ip}, \quad \forall p \in \mathcal{T}, \forall i \in \mathcal{U}_p,
\end{equation}
where $\boldsymbol{\Sigma}_{ip} \triangleq N_T \mathbf{I} + \sum_{j \in \mathcal{U} \setminus i} \sum_{q \in \mathcal{T}_j} {\lambda}_{jq} \mathbf{h}_{jp} \mathbf{h}_{jp}^H$. Subsequently, by left-multiplying both sides of (\ref{eq:da1}) with $\textbf{h}_{ip}^H \boldsymbol{\Sigma}_{ip}^{-1}$, we obtain:
\begin{equation} \label{eq:da2}
    \textbf{h}_{ip}^H \textbf{w}_{ip} = \frac{\lambda_{ip}}{ \gamma_{ip}} \textbf{h}_{ip}^H \boldsymbol{\Sigma}_{ip}^{-1} \textbf{h}_{ip} \textbf{h}_{ip}^H \textbf{w}_{ip}, \quad \forall p \in \mathcal{T}, \forall i \in \mathcal{U}_p.
\end{equation}
By cancelling $\textbf{h}_{ip}^H \textbf{w}_{ip}$ from both sides of the (\ref{eq:da2}), we obtain the coupled equation set for $\{\lambda_{ip}\}$ in  (\ref{eq:lamda}).

Next, we solve the equation (\ref{eq:da1}) after obtaining the expression of $\{\lambda_{ip}\}$. It is easy to check that $\hat{\textbf{w}}_{ip} = \boldsymbol{\Sigma}_{ip}^{-1} \textbf{h}_{ip}$ is a solution of (\ref{eq:da1}). By substituting $\hat{\textbf{w}}_{ip}$ into (\ref{eq:da1}), the left-hand sides reduces to $\textbf{h}_{ip}$ while the right-hand side is expressed as $\frac{\lambda_{ip}}{\gamma_{ip}} \textbf{h}_{ip}^H \boldsymbol{\Sigma}_{ip}^{-1} \textbf{h}_{ip}^H \textbf{h}_{ip}$. Besides, (\ref{eq:da2}) implies that $1 = \frac{\lambda_{ip}}{\gamma_{ip}} \textbf{h}_{ip}^H \boldsymbol{\Sigma}_{ip}^{-1} \textbf{h}_{ip}^H$. Consequently, the right-hand side of (\ref{eq:da1}) and the left-hand side of (\ref{eq:da1}) are equivalent. Similar to \cite{wyu2007}, we set  $\textbf{w}_{ip} = \sqrt{\frac{\delta_{ip}}{N_T}} \hat{\textbf{w}}_{ip}$, where $\delta_{ip}$ serves as a scaling factor determining the power. To calculate $\delta_{ip}$, we observe that $\lambda_{ip}$ is non-zero, implying that the SINR constraints in (\ref{eq:opt22}) hold with equality due to complementary slackness conditions. By taking $\textbf{w}_{ip}$ into (\ref{eq:opt22}) and ensuring the satisfaction of SINR constraints with equality, we can get $\textbf{w}_{ip}$ by solving the linear equation for $\delta_{ip}$, as previously shown in Section \ref{sec:central}.

\section{Proof of Theorem \ref{eq:th1}} 
\label{sec:proof1}
We firstly propose a heuristic approach  for the deterministic approximation of ${\lambda_{iq}}^*$, with the assumption that the optimal Lagrangian multipliers are deterministic values, independent of channel vectors. Defining $\boldsymbol{\lambda}^* = \{\lambda_{ip}^*\}$ as the fixed point of (\ref{eq:lamda}), we establish the following equality:
\begin{equation} \label{eq:B1}
    \frac{\gamma_{ip}}{\lambda_{ip}^*} = \frac{1}{N_T} \mathbf{h}_{ip}^H(\mathbf{I} +  \sum_{j \in \mathcal{U} \setminus i} \sum_{q \in \mathcal{T}_j} \frac{\lambda_{jq}^*}{N_T} \mathbf{h}_{jp} \mathbf{h}_{jp}^H)^{-1} \mathbf{h}_{ip}. 
\end{equation}
Under the mistaken premise that  $\boldsymbol{\lambda}^*$ is pre-defined and independent of channel vectors, the trace lemma yields the following result:
\begin{equation} \label{eq:B2}
     \frac{\gamma_{ip}}{\lambda_{ip}^*} - \frac{1}{N_T} \operatorname{Tr} [\boldsymbol{\Theta}_{ip} (\mathbf{I} +  \sum_{j \in \mathcal{U} \setminus i} \sum_{q \in \mathcal{T}_j} \frac{\lambda_{jq}^*}{N_T} \mathbf{h}_{jp} \mathbf{h}_{jp}^H)^{-1} ] \to 0, \; a.s.
\end{equation}
Due to the finiteness of $T_i$, the cardinality of the set of BSs serving $i$, the rank-one perturbation lemma allows adding $\sum_{q\in \mathcal{T}_i} \frac{\lambda_{iq}^*}{N_T} \textbf{h}_{ip} \textbf{h}_{ip}^H$ to the matrix within parentheses in (\ref{eq:B2}) without altering the limitation. Then we have:
\begin{equation} \label{eq:B3}
     \frac{\gamma_{ip}}{\lambda_{ip}^*} - \frac{1}{N_T} \operatorname{Tr} [\boldsymbol{\Theta}_{ip} (\mathbf{I} +  \sum_{j \in \mathcal{U}} (\sum_{q \in \mathcal{T}_j} \frac{\lambda_{jq}^*}{N_T} )\mathbf{h}_{jp} \mathbf{h}_{jp}^H)^{-1} ] \to 0, \; a.s.
\end{equation}
Denote the second term of (\ref{eq:B3}) as $m_{ip}^*$, then $m_{ip}^*$ converges to $\bar{m}_{ip}^*$ almost surely according to DE theory \cite{RMT2011}.  $\bar{m}_{ip}^*$ is given by the fixed point of the system equation as follows:
\begin{equation}  \
    \bar{m}_{ip}^* = \operatorname{Tr}(\boldsymbol{\Theta}_{ip} (\sum_{j \in \mathcal{U}} \frac{(\sum_{q \in \mathcal{T}_j } {\lambda}_{jq}^* ) \boldsymbol{\Theta}_{jp}}{1+ (\sum_{q \in \mathcal{T}_j } {\lambda}_{jq}^* ) \bar{m}_{jp}^*} + N_T \mathbf{I})^{-1}).
\end{equation}
However, the optimal Lagrangian multipliers $\{\lambda_{ip}^*\}$ are functions of channel vectors instead of fixed values. Then, in order to complete the proof, we need to show $\{\lambda_{ip}^*\} $ is close to $\frac{\gamma_{ip}}{\bar{m}_{ip}^*}$ in the large system limitation. This is analogous to the proof presented in \cite{has2019}, and we will not elaborate further.

\section{Proof of Theorem \ref{eq:th2}} 
\label{sec:proof2}
Given the DE of the optimal Lagrangian multipliers as $\{\bar{\lambda}_{ip}\}$, we can deduce the corresponding DE of the entries $F^{pq}_{ij}$. Firstly, we rewrite the diagonal elements $F^{pp}_{ii}$ in coupling matrix (\ref{eq:F2}) as $F^{pp}_{ii} = \frac{1}{\gamma_{ip}} |\frac{1}{N_T} \textbf{h}_{ip}^H \boldsymbol{\Theta}_{p}^{\setminus i} \textbf{h}_{ip}|^2$ with the definition $\boldsymbol{\Sigma}_{p}^{\setminus i_1,\cdots,i_k} = (\textbf{I} + \sum_{j\in \mathcal{U} \setminus \{i_1,\cdots,i_k\} } \sum_{q \in \mathcal{T}_j} \frac{\bar{\lambda}_{jq}}{N_T} \textbf{h}_{jp} \textbf{h}_{jp}^H)^{-1}$. The trace lemma yields $\frac{1}{N_T} \textbf{h}_{ip}^H \boldsymbol{\Theta}_{p}^{\setminus i} \textbf{h}_{ip} - \frac{1}{N_T} \operatorname{Tr} (\boldsymbol{\Theta}_{ip} \boldsymbol{\Sigma}_{p}^{\setminus i})$ converges to zero almost surely. Applying the rank-one perturbation lemma, we can replace the trace term with $\operatorname{Tr} (\boldsymbol{\Theta}_{ip} \boldsymbol{\Sigma}_i)$ without limitation change where $\boldsymbol{\Sigma}_i = (\textbf{I} + \sum_{j \in \mathcal{U}} \sum_{q \in \mathcal{T}_j} \frac{\bar{\lambda}_{jq}}{N_T} \textbf{h}_{jp} \textbf{h}_{jp}^H)^{-1}$. The trace term obtained is equal to $\bar{m}_{ip}$ as defined in (\ref{eq:dem}). This implies that $F^{pp}_{ii} - \frac{1}{\gamma_{ip}} |\bar{m}_{ip}|^2 \to 0$ almost surely. 

The non-zero non-diagonal elements of the coupling matrix $F^{pq}_{ij} = -\frac{1}{N^2} \textbf{h}_{iq}^H \boldsymbol{\Sigma}_{q}^{\setminus j} \textbf{h}_{jq} \textbf{h}_{jq}^H \boldsymbol{\Sigma}_q^{\setminus j} \textbf{h}_{iq}$ can be rewritten as:
\begin{equation} \label{eq:deFip}
    F_{pq}^{ij} = -\frac{1}{N_{T}^2} \frac{\textbf{h}_{iq}^H \boldsymbol{\Sigma}_q^{\setminus i,j} \textbf{h}_{jq} \textbf{h}_{jq}^H \boldsymbol{\Sigma}_q^{\setminus i,j} \textbf{h}_{jq} \textbf{h}_{iq}}{\left[1 + \left(\frac{1}{N_T} \sum_{r \in \mathcal{T}_i} \bar{\lambda}_{ir}\right) \textbf{h}_{iq}^H \boldsymbol{\Sigma}_q^{\setminus i,j} \textbf{h}_{iq}\right]^2},
\end{equation}
by using matrix inversion equality $\textbf{q}^H (\textbf{B} + \tau \textbf{q} \textbf{q}^H)^{-1} = \frac{1}{1 + \tau \textbf{q}^H \textbf{B}^{-1} \textbf{q}} \textbf{q}^H \textbf{B}^{-1}$ where $\textbf{q} \in \mathbb{C}^{N_T}$ and $B \in \mathbb{C}^{N_T \times N_T}$ is positive definite matrix. Since $T_i$ and $T_j$ are constant, we can apply trace lemma and rank-one lemma to the denominator in (\ref{eq:deFip}) similar to previous analysis, getting the following result:
\begin{equation}
    \left[1\!+\!(\frac{1}{N_T} \sum_{r \in \mathcal{T}_i} \bar{\lambda}_{ir}) \textbf{h}_{iq}^H \boldsymbol{\Sigma}_{q}^{\setminus i,j} \textbf{h}_{iq}]^2\!-\![1 + (\sum_{r \in \mathcal{T}_i} \bar{\lambda}_{ir}) \bar{m_{iq}}\right]^2\!\to\!0, \ a.s.
\end{equation}
We continue by applying the trace lemma on the numerator of (\ref{eq:deFip}), yielding the following result:
\begin{equation} \label{eq:deFip2}
    \begin{aligned}
        \frac{1}{N_T^2} \textbf{h}_{iq}^H \boldsymbol{\Sigma}_q^{\setminus i,j} \textbf{h}_{jq} \textbf{h}_{jq}^H \boldsymbol{\Sigma}_q^{\setminus i,j} \textbf{h}_{iq} - \frac{1}{N_T^2} \operatorname{Tr} (\boldsymbol{\Theta}_{iq}   \boldsymbol{\Sigma}_q^{\setminus i,j}   & \textbf{h}_{jq} \textbf{h}_{jq}^H  \boldsymbol{\Sigma}_q^{\setminus i,j}) \\  &\to 0, \quad a.s. 
    \end{aligned}
\end{equation}
Rearranging terms within the trace operator in (\ref{eq:deFip2}) and subsequently reapplying the trace lemma, we get
\begin{equation}
    \textbf{h}_{jq}^H \boldsymbol{\Sigma}_q^{\setminus i,j} \frac{\boldsymbol{\Theta}_{iq}}{N_T^2} \boldsymbol{\Sigma}_q^{\setminus i,j} \textbf{h}_{jq} - \operatorname{Tr} \left(\boldsymbol{\Theta}_{jq} \boldsymbol{\Sigma}_q^{\setminus i,j} \frac{\boldsymbol{\Theta}_{iq}}{N_T^2} \boldsymbol{\Sigma}_q^{\setminus i,j}\right) \to 0, \quad a.s.
\end{equation}
Then we can add the excluded $i$-th term $\sum_{q \in \mathcal{T}_i} \frac{\bar{\lambda}_{jq}}{N_T} \textbf{h}_{jp} \textbf{h}_{jp}^H$  and $j$-th term $\sum_{q \in \mathcal{T}_j} \frac{\bar{\lambda}_{jq}}{N_T} \textbf{h}_{jp} \textbf{h}_{jp}^H$  to $\boldsymbol{\Sigma}_q^{\setminus i,j}$ according to the rank-one lemma, yielding
\begin{equation} \label{eq:deFip3}
    (-F_{ij}^{pq}) - \frac{\frac{1}{N_T^2} \operatorname{Tr} (\boldsymbol{\Theta}_{jq} \boldsymbol{\Sigma}_q \boldsymbol{\Theta}_{iq} \boldsymbol{\Sigma}_q)}{[1 + (\sum_{r \in \mathcal{T}_i} \bar{\lambda}_{ir} ) \bar{m}_{iq}]^2} \to 0, \quad a.s.
\end{equation}
Using matrix derivative formula $\partial \mathbf{Z}^{-1} / \partial x=-\mathbf{Z}^{-1}(\partial \mathbf{Z} / \partial x) \mathbf{Z}^{-1}$ where $\mathbf{Z}$ is a matrix function of variable $x$, we can express the nominator in (\ref{eq:deFip3}) as follows:
\begin{equation}
    \frac{1}{N_T^2} \operatorname{Tr}(\boldsymbol{\Theta}_{jq} \boldsymbol{\Sigma}_q \boldsymbol{\Theta}_{iq} \boldsymbol{\Sigma}_q) = \frac{\partial}{\partial x} m_{j,i,q}(z,x)|_{x=0, z=-1},
\end{equation}
with the definition $m_{j,i,q}(z,x) = \frac{1}{N_T} \operatorname{Tr}[\boldsymbol{\Theta}_{jq}(\boldsymbol{\Sigma}_q - z \textbf{I} - x \boldsymbol{\Theta}_{iq})^{-1}]$. Therefore, the final step entails computing the DE of $m_{j,i,q}^{\prime}$, i.e., the partial derivative of $m_{j,i,q}(z,x)$ with respect to $x$ at $x=0$ and $z=-1$. According to DE theory \cite{RMT2011}, the DE of $m_{j,i,q}(z,x)$ is given by $\bar{m}_{j,i,q}(z.x) = \frac{1}{N_T} \operatorname{Tr} [\boldsymbol{\Theta}_{jq} \mathbf{T}_{q,i}(z,x)]$ where
\begin{equation} \label{eq:T1}
    \mathbf{T}_{b,i}(z,x) = \left[\frac{1}{N_T} \sum_{k \in \mathcal{U}} \frac{(\sum_{r \in \mathcal{T}_k} \bar{\lambda}_{kr}) \boldsymbol{\Theta}_{kb} }{1 + (\sum_{r \in \mathcal{T}_k} \bar{\lambda}_{kr}) \bar{m}_{k,i,b}(z,x)} - x \boldsymbol{\Theta}_{ib} - z \textbf{I}\right]^{-1}.
\end{equation}
By taking the derivative of ($\ref{eq:T1}$) with respect to $x$ at $x=0$ and $z=-1$, we have 
\begin{equation} \label{eq:T2}
    \mathbf{T}_{b,i}^{\prime} = \mathbf{T}_b \left(\frac{1}{N_T} \sum_{k \in \mathcal{U}} \frac{(\sum_{r \in \mathcal{T}_k} \bar{\lambda}_{kr})^2 \boldsymbol{\Theta}_{kb} \bar{m}_{k,i,b}^{\prime}}{[1 + (\sum_{r \in \mathcal{T}_k} \bar{\lambda}_{kr}) \bar{m}_{kb}]^2} + \boldsymbol{\Theta}_{ib}\right) \mathbf{T}_b,
\end{equation}
where $\mathbf{T}_b = \mathbf{T}_{b,i}(-1,0), \mathbf{T}_{b,i}^{\prime} = \frac{\partial}{\partial x} \mathbf{T}_{b,i}(z,x)|_{z=-1, x=0}$, and $\bar{m}_{kb} = \bar{m}_{k,i,b}(-1,0)$. Then with the equality $\bar{m}_{j,i,q}^{\prime} = \frac{1}{N_T} \operatorname{Tr} (\boldsymbol{\Theta}_{jq} \textbf{T}_{q,i}^{\prime})$, we get a linear system equation of $\bar{m}_{j,i,q}^{\prime}$ and the solution is shown in Theorem \ref{eq:th2}.

\section{Derivations of the Closed-form Solution for Problem (\ref{eq:opt7}) \label{appendixupdateA}}
In order to solve Problem (\ref{eq:opt7}), we first write the Lagrangian function of this problem as follows:
\begin{equation} \label{ea:La1}
    \begin{aligned}
         \mathcal{L} &\!=\!\sum_{j\in \mathcal{U}_{p}}|A_{i,j}-\mathbf{h}_{ip}^{H}\mathbf{w}_{jp}^{(l,m)}+ \lambda_{i,j}^{(l,m)}|^{2} + \beta\left[\sum_{j \in \mathcal{U}_p \backslash i}\left|A_{i, j}\right|^{2}\!-\!\tau_{ip} \right] \\
        &+\zeta\Bigg[\gamma_{ip}\left(\sum_{j \in \mathcal{U}_p \setminus i} |A_{i,j}|^2 + \sum_{q \in \mathcal{T}_i \setminus p} \tau_{iq} + \sum_{q \in \mathcal{T} \setminus \mathcal{T}_i} \epsilon_{iq} + \sigma^2\right) \\
        &-2\Re{\{(\mathbf{w}_{ip,c}^{(l)})^{H}\mathbf{h}_{ip}A_{i,i}\}}+|\mathbf{h}_{ip}^{H}\mathbf{w}_{ip}^{(l)}|^2\Bigg] \\
        & = \sum_{j\in \mathcal{U}_{p}}\!|A_{i,j}-\mathbf{h}_{ip}^{H}\mathbf{w}_{jp}^{(l,m)}+ \lambda_{i,j}^{(l,m)}|^{2}\!+\!(\zeta \gamma_{ip}\!+\!\beta)\!\sum_{j \in \mathcal{U}_{p} \backslash i} |A_{i,j}|^2  \\
        & + \zeta [\gamma_{ip} (\sum_{q \in \mathcal{T}_i \setminus p} \tau_{iq} + \sum_{q \in \mathcal{T} \setminus \mathcal{T}_i} \epsilon_{iq} + \sigma^2) + |\mathbf{h}_{ip}^{H}\mathbf{w}_{ip}^{(l)}|^2 ] - \beta \tau_{ip} \\
        &-2 \zeta \Re{\{(\mathbf{w}_{ip,c}^{(l)})^{H}\mathbf{h}_{ip}A_{i,i}\}},
    \end{aligned} 
\end{equation}
where $\zeta>0, \beta>0 $  is the dual variables associated with the inequality constraint (\ref{eq:opt72}) and (\ref{eq:opt73}), respectively. The derivative of $\mathcal{L}$ with respect to $A_{i,j}$ :
\begin{equation}
    \frac{\partial \mathcal{L}}{\partial A_{i, j}}=\left\{\begin{array}{llll}
2\left(A_{i, j}-\mathbf{h}_{ip}^{H} \mathbf{w}_{jp}^{(l,m)}+\lambda_{i, j}^{(l,m)}\right)+2\left(\zeta \gamma_{ip}+\beta\right) A_{i, j},\\ 
\quad \quad \quad \quad \quad \quad \quad \quad \quad \quad \quad \quad \quad \quad \quad \quad \text { if } j \neq i, \\
2\left(A_{i, j}-\mathbf{h}_{ip}^{H} \mathbf{w}_{jp}^{(l,m)}+\lambda_{i,j}^{(l,m)}\right)-2 \zeta \mathbf{h}_{ip}^{H} \mathbf{w}_{ip,c}^{(l)},\\ 
\quad \quad \quad \quad \quad \quad \quad \quad \quad \quad \quad \quad \quad \quad \quad \quad \text { if } j=i.
\end{array}\right.
\end{equation}
By setting the derivatives to zero, the optimal primal variables $A_{i,j}$ can be expressed as:
\begin{equation} \label{eq:Solu1}
    A_{i, j}=\left\{\begin{array}{ll}
\frac{\mathbf{h}_{ip}^{H} \mathbf{w}_{jp}^{(l,m)}-\lambda_{i, j}^{(l,m)}}{1+\zeta \gamma_{ip}+\beta}, & \text { if } j \neq i \\
\zeta^{(l,m)} \mathbf{h}_{ip}^{H} \mathbf{w}_{ip,c}^{(l)}+\mathbf{h}_{ip}^{H} \mathbf{w}_{jp}^{(l,m)}-\lambda_{i,j}^{(l,m)}, & \text { if } j=i
\end{array}\right.
\end{equation}
Substituting (\ref{eq:Solu1}) into (\ref{eq:opt72}),(\ref{eq:opt73}) and and solving jointly the obtained equations, we derive 
\begin{align}
A_{i,j}^{(l,m+1)}= \begin{cases}\frac{\sqrt{\tau_{ip}}\left(\mathbf{h}_{ip}^{H}\mathbf{w}_{jp}^{(l,m)}-\lambda_{i,j}^{(l,m)}\right)}{\sqrt{\sum_{j \in \mathcal{U}_p \setminus i}|\mathbf{h}_{ip}^{H}\mathbf{w}_{jp}^{(l,m)}-\lambda_{i,j}^{(l,m)}|^{2}}}, & \text { if } j \neq i, \\ \zeta_{ip}^{(l,m)}\mathbf{h}_{ip}^{H}\mathbf{w}_{jp,c}^{(l)} +\mathbf{h}_{i}^{H}\mathbf{w}_{jp}^{(l,m)}-\lambda_{i,j}^{(l,m)}, & \text { if } j = i,\end{cases}
\end{align}
\noindent where $\zeta_{i}^{(l,m)}=\Big(\gamma_{ip}(\tau_{ip} + \sum_{q \in \mathcal{T}_i \setminus p} \tau_{iq} + \sum_{q \in \mathcal{T} \setminus  \mathcal{T}_i} \epsilon_{iq} + \sigma^2)+|\mathbf{h}_{ip}^{H}\mathbf{w}_{ip,c}^{(l)}|^2-2\Re\{(\mathbf{w}_{ip,c}^{(l)})^{H}\mathbf{h}_{ip}(\mathbf{h}_{ip}^{H}\mathbf{w}_{ip}^{(l,m)}-\lambda_{i,i}^{(l,m)})\}\Big)\Big/\Big(2|\mathbf{h}_{ip}^{H}\mathbf{w}_{ip,c}^{(l)}|^2\Big)$. 

% Can use something like this to put references on a page
% by themselves when using endfloat and the captionsoff option.
\ifCLASSOPTIONcaptionsoff
  \newpage
\fi

\end{document}